\newtheorem{question}{\em Question}
\newtheorem{proposition}{\em Proposition}
\newtheorem{problem}{\em Problem}
\newtheorem{theorem}{\em Theorem}
\newtheorem{definition}{\em Definition}
\newtheorem{lemma}{\em Lemma}
\newtheorem{remark}{\em Remark}
\newtheorem{corollary}{\em Corollary}
\newcommand{\sw}{{\sf SW}}
\journal{Sample Journal}
\begin{document}

\begin{frontmatter}

\title{Assigning tasks to agents under time conflicts: a parameterized complexity approach\tnoteref{label0}}
\tnotetext[label0]{This work has been partially supported by the Italian MIUR PRIN 2017 Project ALGADIMAR "Algorithms, Games, and Digital Markets".}

\author[label1,label2]{Alessandro Aloisio\corref{cor1}}
\address[label1]{Universit\`a degli Studi dell'Aquila, L'Aquila, Italy}
\address[label2]{Gran Sasso Science Institute, L'Aquila, Italy\fnref{label4}}

\cortext[cor1]{I am corresponding author}

\ead{alessandro.aloisio@univaq.it}

\author[label2]{Vahan Mkrtchyan}
\ead{vahan.mkrtchyan@gssi.it}


\begin{abstract}
We consider the problem of assigning tasks to agents under time conflicts, with applications also to frequency allocations in point-to-point wireless networks. In particular, we are given a set $V$ of $n$ agents, a set $E$ of $m$ tasks, and $k$ different time slots. Each task can be carried out in one of the $k$ predefined time slots, and can be represented by the subset $e\subseteq E$ of the involved agents. Since each agent cannot participate to more than one task simultaneously, we must find an allocation that assigns non-overlapping tasks to each time slot. Being the number of slots limited by $k$, in general it is not possible to executed all the possible tasks, and our aim is to determine a solution maximizing the overall social welfare, that is the number of executed tasks. We focus on the restriction of this problem in which the number of time slots is fixed to be $k=2$, and each task is performed by exactly two agents, that is $|e|=2$. In fact, even under this assumptions, the problem is still challenging, as it remains computationally difficult. We provide parameterized complexity results with respect to several reasonable parameters, showing for the different cases that the problem is fixed-parameter tractable or it is paraNP-hard.
\end{abstract}

\begin{keyword}
tasks assignment problem \sep pair of disjoint matchings \sep fixed parameter tractability \sep colouring
\end{keyword}

\end{frontmatter}



\section{Introduction}
\label{sec:intro}

In this paper we consider a problem of specific task-assignment to agents with time conflicts. 
Each agent belongs to a set $V$ of cardinality $n$, while a task is completely defined by the subset $e$ of agents needed to execute it. Since there are only $k$ time slots for carrying out the tasks, we want to find the best tasks-agents assignment in order to maximize the executed task, i.e., the ones assigned to a time slot.


Since each agent cannot participate to more than one task simultaneously, we must find an allocation that assigns non-overlapping tasks to each time slot. Being the number of slots limited by $k$, in general it is not possible to executed all the possible tasks, and our aim is to determine a solution maximizing the overall social welfare, that is the number of executed tasks. Since it is a first study, we focus on the restriction of this problem in which the number of time slots is fixed to be $k=2$, and each task is performed by exactly two agents, that is $|e|=2$. In fact, even under these assumptions, the problem is still challenging, as it remains computationally difficult. We provide parameterized complexity results with respect to several reasonable parameters, showing for the different cases that the problem is fixed-parameter tractable or it is paraNP-hard. The results are reported in Table~\ref{tab:1}.

With these restrictions, the problem can be described with a graph $G(V,E)$, where nodes represents agents, and each edge represent a task assigned to a couple of agents. In the general version, the graph becomes a hypergraph, and each task can be executed by more than two agents.  

Parameterized complexity theory has been raising more interest in the last few years because it looks at giving a finer running time analysis with respect to the classical complexity theory. This allows to better understand where the difficulty of a problem lies. The main idea is to provide algorithms whose running times are based not only on the input size but also on one or more parameters. In particular, this kind of algorithms require that the parameters do not appear in the exponent part of the input size so that the running time is exponential only in the parameters and not in the input size. More details on parameterized complexity theory can be found in \cite{FPTbook}.

The rest of this paper is organized as follows. The problem description is formally stated in Section~\ref{sec:prob_def}, where we also relate the problem to the maximum $k$-edge-colorable subgraph problem. In Section~\ref{sec:motwork_tasks}, we provide some related work to the assigning problems and to the edge coloring one. In Sections~\ref{sec:aux}, we present some auxiliary results that are useful in the rest of the paper. The first part of our results are described in Section \ref{sec:main}. In Section~\ref{sec:dynamic}, we focus on two dynamic programming algorithm for branchwidth, treewidth, and cliquewidth.  In Section \ref{sec:IterativeCompression}, we consider two problems closely-related to the maximum 2-edge-colorable subgraph problem. Using the method of iterative compression, we show that they are FPT with respect to budget $k$.  We conclude the paper in Section \ref{sec:conc}, where we summarize our results and outline the directions for future work.

\section{Problem definition}\label{sec:prob_def}
We provide here a formal definition of the assigning tasks to agents under time conflicts problem ($T2ATC(k)$), where $k$ is the number of time slots, by using hypergraphs.
For a hypergraph $\mathcal{H}=(V,E)$, we denote by $V$ its node set of cardinality $n$, and by $E$ its edge set of cardinality $m$.
Unless otherwise stated, $\mathcal{H}$ 
is assumed to be undirected, and without multiple edges or self-loops.

Each node $u\in V$ corresponds to an agent, while each hyperedge $e\in E$ corresponds to a task, where the nodes in $e$ are the agents involved in the task. An edge can be assigned to one of the $k$ time slots available. Let $[k]$ denote $\{0,1,\ldots k\}$. If we consider also the dummy time slot $0\in [k]$ that essentially means 'not assigned', then we can say that every edge is assigned to a shift. The addition of the dummy time slot is useful for the problem definition and for some of the theorems.  

The utility of an agent $u\in V$ is defined as the number of the tasks she belongs to that are assigned to a non dummy time slot. That is $\mu_u(\tau)=\sum_{e\in E\colon u\in e}w(e)$, where $\tau$ is a global assignment of the tasks to the time slots in $[k]$, and $w(e)\in \{0,1\}$ is the weight of edge $e$ that defines if $e$ is assigned to a real or a to the dummy shift. The problem definition is the following one.

\par\medskip\noindent
\begin{tabularx}{\columnwidth}{@{\hspace{1mm}}l@{\hspace{1mm}}X@{\hspace{1mm}}}
	\hline & \\[-4mm]
	\multicolumn{2}{c}{\hspace{-2mm}\textsf{T2ATC(p) : Ass. tasks to agents under time conflicts}\hspace*{-2mm}}\\
	\hline \ & \\[-3mm]
	\bfseries{\textit{Input}:} & A hypergraph $\mathcal{H}=(V,E)$; and an integer $k\geq 2$.\\ 
	\bfseries{\textit{Solution}:} & An assignment of each task to a time slot $\tau \colon E \to [k]$ covering $\mathcal{H}$ such that for all $u\in V$, and for every two edges $e_1$ and $e_2$ that contain $u$, $e_1$ and $e_2$ belong to the same time slot if and only if $\tau(e_1)=\tau(e_2)=0$, that is both task $e_1$ and $e_2$ are not assigned; and a binary weight function $w\colon E \to \{0,1\}$ that equals 1 if a task is assigned to a non dummy time slot, 0 otherwise.\\
	\bfseries{\textit{Goal}:} & Maximize the Social Welfare $\sw(\tau)=\sum_{u\in V} \mu_u(\tau)=\sum_{u\in V}\sum_{e\in E\colon u\in e}w(e)$, i.e., the sum of all the agents' utilities.\\[0.5mm]	
	\hline
\end{tabularx}
\par\medskip\noindent

The social welfare of a particular assignment (solution) can be rewritten as  the sum of the weight of each hyperedge times the number of the involved agents (i.e. the cardinality of the hyperedge), that is $\sw(\tau)=\sum_{e\in E}|e|w(e)$. 

As already said in the introduction, in this paper we focus only on tasks assigned to exactly two agents, that is $|e|=2$. This allows us to represent the problem on usual graphs with usual edges. Under this restriction, we can use an equivalent metric for the social welfare, that is $\sw(\tau)=\sum_{e\in E}w(e)$, because all the edges have cadinality 2, so we can remove 2. For usual graphs, and with this new definition of social welfare, it is clear that the problem is equivalent to finding a $k$-edge-colorable subgraph with maximum number of edges together with its $k$-edge-coloring. In fact, we can see each non dummy time slot as a color that can be assigned to each edge, avoiding that a node has two edges with the same color incident to it. The dummy time slot, on the other hand, means 'not colored'. Because of this problem equivalence, all the results we provide are given for the edge coloring problem.    

\paragraph{Edge coloring problem definition}
Since we focus on graphs ($|e|=2$) and two time slots ($k=2$), we recall some useful results, and give a formal definition of the problem as $k$-edge-coloring.

The set of vertices and edges of a graph $G$ is denoted by $V$ and $E$, respectively, $d_{G}(u)$ denotes the degree of a vertex $u$ of $G$. Let $\delta(G)$ and $\Delta(G)$ be the minimum and maximum degree of vertices of $G$. Let $rad(G)$ and $diam(G)$ be the radius and diameter of $G$. 

A {matching} in a graph $G$ is a subset of $E$ such that no vertex of $G$ is incident to two edges from it. A {maximum matching} is a matching that contains the largest possible number of edges.

For $k\geq 0$, a graph $G$ is {$k$-edge colorable}, if its edges can be assigned colors from a set of $k$ colors so that adjacent
edges receive different colors. The smallest $k$, such that $G$ is $k$-edge-colorable is called chromatic index of $G$ and is denoted by $\chi'(G)$. The classical theorem of Shannon states that for any multi-graph $G$, $\Delta(G)\leq \chi'(G) \leq \left \lfloor \frac{3\Delta(G)}{2} \right \rfloor$ \cite{Shannon:1949,stiebitz:2012}. Moreover, the classical theorem of Vizing states that for any multi-graph $G$, $\Delta(G)\leq \chi'(G) \leq \Delta(G)+\mu(G)$ \cite{stiebitz:2012,vizing:1964}. Here $\mu(G)$ denotes the maximum multiplicity of an edge of $G$. A multi-graph $G$ is class I, if $\chi'(G)=\Delta(G)$, otherwise it is class II.

If $k<\chi'(G)$, we cannot color all edges of $G$ with $k$ colors. Therefore, it is natural to investigate the maximum number of edges that one can color with $k$ colors. A subgraph $H$ of $G$ is called {maximum $k$-edge-colorable}, if $H$ is $k$-edge-colorable and contains maximum number of edges among all $k$-edge-colorable subgraphs of $G$. For $k\geq 0$ and a graph $G$ let\\

$\nu_{k}(G) = \max \{ |E(H)| : H$  is a $k$-edge-colorable subgraph of  $G \}.$\\

Clearly, a $k$-edge-colorable subgraph is maximum if it contains exactly $\nu_k(G)$ edges. Observe that $\nu_1(G)$ is the size of a maximum matching of $G$. We will shorten this notation to $\nu(G)$. In this paper, we deal with the exact solvability of the maximum $k$-edge-colorable subgraph problem. Its precise formulation is the following:
\begin{problem}\label{prob:MaxkEdgeColsub}
	(Maximum $k$-edge-colorable subgraph) Given a graph $G$ and an integer $k$, find a $k$-edge-colorable subgraph with maximum number of edges together with its $k$-edge-coloring.
\end{problem} We investigate this problem from the perspective of fixed-parameter tractability. Recall that an algorithmic problem $\Pi$ is fixed-parameter tractable with respect to a parameter $\theta$, if there is an exact algorithm solving $\Pi$, whose running-time is $f(\theta)\cdot poly(size)$. Here $f$ is some (computable) function of $\theta$, $size$ is the length of the input and $poly$ is a polynomial function. A (parameterized) problem is paraNP-hard, if it remains NP-hard even when the parameter is constant.

In this paper, we focus on the maximum 2-edge-colorable subgraph problem which is the restriction of the problem to the case $k=2$. We present some results that deal with the fixed-parameter tractability of this problem with respect to various graph-theoretic parameters. The main contributions of this paper are the following:

\begin{itemize}
	\item ParaNP-hardness of the problem with respect to the radius, diameter and $|V|-MaxLeaf(G)$,
	\item Fixed-parameter tractability of our problem with respect to $|V|-\delta$, branchwidth, treewidth, the size of largest matching, the dimension of the cycle space and $MaxLeaf(G)$.
	\item Polynomial time solvability of our problem for the graphs of bounded cliquewidth.
	\item Fixed-parameter tractability of two related problems with respect to the budget $k$. 
	
\end{itemize}

The results obtained in this paper are summarized in Table~\ref{tab:1}. For the notions, facts and concepts that are not explained in the paper the reader is referred to \cite{FPTbook,west:1996}.

\begin{table*}[t]
	\caption{Summary of the main results obtained the paper about maximum 2-edge-colorable problem.}\label{tab:1}
	\footnotesize
	\begin{tabularx}{\textwidth}{|X|X|l|l|}
		\hline
		\textbf{Results} & \textbf{Parameter} & \textbf{in FPT ?} & \textbf{Time}\\
		\hline
		Theorem \ref{thm:radius} & 	Radius $rad(G)$  & paraNP-hard & \\
		\hline
		Remark \ref{rem:diam}& Diameter $diam(G)$ & paraNP-hard & \\
		\hline
		Theorem \ref{thm:HardnessMax2SubgraphMaxDegree}  & $\delta$, $\Delta$, $|V|-\Delta$, Number of maximum-degree vertices & paraNP-hard & \\
		\hline
		Proposition \ref{prop:Vminusdelta} & $|V|-\delta$ & Yes &  $2^{4(|V|-\delta)^2}$\\
		\hline
		Theorems \ref{thm:TreeWidth},\ref{thm:TreeWidth2} & Treewidth & Yes  &  $O(6^{\frac{4}{3}(h+1)}(2|V|-1))$\\
		\hline
		Theorem \ref{th:BranchWidth} & Branchwidth & Yes  &   $O(6^{2h}(2|V|-1))$\\
		\hline
		Theorem \ref{thm:CliqueWidth} & Cliquewidth & (?) &  $O(h^2 |V|^{(16h^2-14h-1)}(|V|+|E|))$\\
		\hline
		Corollary \ref{cor:FPTnu1} & Maximum matching $\nu(G)$ & Yes & $O(6^{\frac{4}{3}(h+1)}(2|V|-1))$ \\
		\hline
		Theorem \ref{thm:FPTdimCycSpace} & Dimension of the cycle space & Yes &  $3^{k}\cdot poly(size)$\\
		\hline
		
		Proposition \ref{prop:MaxLeafFPT} & $MaxLeaf(G)$ & Yes &  $2^{(MaxLeaf(G)+1)^2}$\\
		\hline
		
		Proposition \ref{prop:|V|minusMaxLeaf} & $|V|-MaxLeaf(G)$ & paraNP-hard &  \\
		\hline
	\end{tabularx}
\end{table*}
\normalsize

\section{Motivation and related work}
\label{sec:motwork_tasks}

In this section we report some motivation and related work concerning the assignment and the edge coloring problems defined in Section~\ref{sec:prob_def}. Assigning tasks to agents, and agents to task have been extensively investigated both in artificial intelligence, combinatorial optimization, operational research, and other scientific fields. Moreover, there exist many different variations of the problem. We report here just a small part of the literature that we think it could be useful to collocate our problem.

A standard one-to-one general assignment problem can be defined with a set of tasks $E$ and a set of agents $V$, and a utility matrix that gives the utility obtained by each agent when it is assigned to a specific task. The first paper talking about the assignment problem appeared in 1952 \cite{Votaw_Orden:1952}, while a survey can be found in \cite{Pentico:2007}. Starting from the original one-to-one problem, many different variations have been formulated and investigated. A famous one is the Generalized Assignment Problem \cite{Oncan:2007}.

Our problem, $T2ATC(k)$, is a particular version of assignment problems, because the hypergraph describes which agents can collaborate to carry out a specific task, that is a task cannot be done by any possible subset of agents. This can be useful, because in many real cases agents are not interchangeable. Moreover, also not every possible agent coalition is allowed (if the hypergraph is not complete). This means that our model is flexible and can be applied to different real problems. 

Another way to see $T2ATC(k)$ is to partition the tasks in time slots (including the dummy one), in order to maximize the social welfare, that is maximizing the number of tasks assigned to a non dummy time slot. This allows us to view our problem also as group structure formation (CSG) \cite{Rahwan:2015,flammini:2018,Rahwan:2008,aloisio_vinci:2020}. This dual view can be useful for frequency allocations in point-to-point wireless networks problems, when $|e|=2$, so when the problem is defined on usual graphs. In fact, each edge is a point-to-point link, $k$ is the number of available frequencies, and, clearly, two adjacent edges cannot use the same frequency, because of interference problems on the nodes, which represent wireless devices.     

\paragraph{Edge coloring related work}

Since we focus on graphs ($|e|=2$) and two time slots ($k=2$), we report also motivation and related works concerning the edge coloring problem. 

There are many papers where the ratio $\frac{\nu_k(G)}{|E|}$ has been investigated. \cite{bollobas:1978,henning:2007,nishizeki:1981,nishizeki:1979,weinstein:1974} prove lower bounds for this ratio in case of regular graphs and $k=1$. For regular graphs of high girth the bounds are improved in \cite{flaxman:2007}. Albertson and Haas investigated the problem in \cite{haas:1996,haas:1997} when $G$ is a cubic graph. See also \cite{samvel:2010}, where it is shown that for every cubic multigraph $G$, $\nu _{2}(G) \geq \frac{4}{5}|V|$ and $\nu _{3}(G) \geq  \frac{7}{6} |V|$. Moreover, \cite{samvel:2014} proves that for any cubic multigraph $G$, $\nu _{2}(G) + \nu _{3}(G) \geq 2|V|$, and in \cite{samvel:2010,corrigendum} Mkrtchyan et al. showed that for any cubic multigraph $G$, $\nu_{2}(G) \leq \frac{|V| + 2\nu_{3}(G)}{4}$. Finally, in \cite{LianaDAM:2019}, it is shown that the sequence $\nu_k$ is convex in the class of bipartite multigraphs. Rizzi in \cite{Rizzi:2009} has shown that the above-mentioned $\frac{7}{6} |V|$ bound for cubic multigraphs can be significantly improved for graphs (without parallel edges) $G$ of maximum degree three. For such graphs $G$, it can be shown that $\nu_3(G) \geq \frac{6}{7}\cdot |E|$ \cite{Rizzi:2009}.

Bridgeless cubic graphs that are not $3$-edge-colorable are called snarks \cite{cavi:1998}, and the ratio for snarks is investigated by Steffen in \cite{steffen:1998,steffen:2004}. This lower bound has also been investigated in the case when the graphs need not be cubic in \cite{miXumbFranciaciq:2013,Kaminski:2014,Rizzi:2009}. Kosowski and Rizzi have investigated the problem from the algorithmic perspective \cite{Kosowski:2009,Rizzi:2009}. The problem of finding a maximum $k$-edge-colorable graph in an input graph is NP-complete for every fixed $k\geq 2$. For example, when $G$ is cubic and $k=2$, we have that $\nu_2(G)=|V|$ if and only if $G$ contains two edge-disjoint perfect matchings. The latter condition is equivalent to saying that $G$ is 3-edge-colorable, which is an NP-complete problem as Holyer has demonstrated in \cite{holyer:1981}. Thus, it is natural to investigate the (polynomial) approximability of the problem. In \cite{FeigeOfekWieder} for each $k\geq 2$ an approximation algorithm for the problem is presented. There for each fixed value of $k \geq 2$, algorithms are proved to have certain approximation ratios and these ratios are tending to $1$ as $k$ goes to infinity. In \cite{Kosowski:2009}, two approximation algorithms for the maximum 2-edge-colorable subgraph and maximum 3-edge-colorable subgraph problems are presented whose performance ratios are $\frac{5}{6}$ and $\frac{4}{5}$, respectively. Finally, note that the results of \cite{FeigeOfekWieder} are improved for $k=3,...,7$ in \cite{Kaminski:2014}.

Some structural properties of maximum $k$-edge-colorable subgraphs of graphs are proved in \cite{samvel:2014,MkSteffen:2012}. There it is shown that every set of disjoint cycles of a graph with $\Delta=\Delta(G) \geq 3$ can be extended to a maximum $\Delta(G)$-edge colorable subgraph. Moreover, there it is shown that any maximum $\Delta(G)$-edge colorable subgraph of a graph is always class I. Observe that this statement is not true when $G$ is a multigraph. If one considers a triangle in which each edge is of multiplicity three, then the maximum degree in it is six. An example of a maximum 6-edge-colorable in this graph will be the triangle in which each edge is of multiplicity two. Observe that it has maximum degree four and chromatic index six. Thus, it is class II. Finally, in \cite{MkSteffen:2012} it is shown that if $G$ is a graph of girth (the length of the shortest cycle) $g \in \left \{ 2k, 2k+1 \right \} (k \geq 1)$ and $H$ is a maximum $\Delta(G)$-edge colorable subgraph of $G$, then $\frac{|E(H)|}{|E|} \geq \frac{2k}{2k+1}$, The bound is best possible as there is an example attaining it.

In \cite{kEdgeColoringFPT} the $k$-edge-coloring problem is considered, which is formulated as follows:
\begin{problem}
	\label{prob:kEdgeColoring} ($k$-edge-coloring) Given a graph $G$ and an integer $k$, check whether $G$ is $k$-edge-colorable.
\end{problem} There it is shown that for each fixed $k$, the $k$-edge-coloring problem is fixed-parameter tractable with respect to the number of maximum degree vertices of the input graph. Observe that the maximum $k$-edge-colorable subgraph problem is harder than $k$-edge-coloring, as if we can construct a maximum $k$-edge-colorable subgraph $H_k$ of the input graph $G$, then in order to see that whether $G$ is $k$-edge-colorable, we just need to check whether $E(H_k)=E$. If one considers the edge-coloring problem, where for an input graph $G$, we need to find a $\chi'(G)$-edge-coloring of $G$, then in \cite{KowalikSIDMA:2018} it is stated that a major challenge in the area is to find an exact algorithm for this problem whose running-time is $2^{O(n)}=O(c^n)$. Observe that the maximum $k$-edge-colorable subgraph problem is harder than edge-coloring. If we are able to solve the maximum $k$-edge-colorable subgraph problem in time $O(f(size))$, then we can solve the Edge-Coloring problem in time $O(f(size))\cdot \log(|V|)$. In order to see this, just observe that we can do a binary search on $k=1,2,...,|V|$, solve the maximum $k$-edge-colorable problem and find an edge-coloring of $G$ with the smallest number of colors. Here we used the fact that any graph $G$ is $|V|$-edge-colorable.

\section{Some auxiliary results}
\label{sec:aux}

In this section, we present some results that will be used in obtaining the main results of the paper. Below we assume that $\mathbb{N}$ is the set of natural numbers.

\begin{lemma}
	\label{lem:Redparam} (\cite{Sasak:2010}) Let $\Pi$ be an algorithmic problem, and let $k_1$ and $k_2$ be some parameters. Assume that there is a (computable) function $g:\mathbb{N}\rightarrow \mathbb{N}$, such that for any instance $I$ of $\Pi$, we have $k_1(I)\leq g(k_2(I))$. Then if $\Pi$ is FPT with respect to $k_1$, then it is FPT with respect to $k_2$.
\end{lemma}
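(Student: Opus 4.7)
The plan is to argue by a direct composition of the given reduction with the assumed FPT algorithm. Suppose $\Pi$ is FPT with respect to $k_1$, witnessed by an algorithm $A$ running in time $f(k_1(I))\cdot p(|I|)$ on any instance $I$, where $f:\mathbb{N}\rightarrow\mathbb{N}$ is computable and $p$ is a polynomial. My first step would be to note that we may assume without loss of generality that $f$ is non-decreasing: if it is not, replace it by $\tilde f(n):=\max_{0\leq i\leq n} f(i)$, which is still computable and dominates $f$.

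Next, given any instance $I$ of $\Pi$, I would simply run the algorithm $A$ on $I$ unchanged. We do not need to compute $k_1(I)$ or $k_2(I)$ at all to execute $A$; the bound on the running time is what matters. By the hypothesis of the lemma, $k_1(I)\leq g(k_2(I))$, and by monotonicity of $f$ this gives
\[
f(k_1(I))\cdot p(|I|)\;\leq\;f\bigl(g(k_2(I))\bigr)\cdot p(|I|).
\]
Defining $h:\mathbb{N}\rightarrow\mathbb{N}$ by $h(x):=f(g(x))$, we see that $h$ is computable as a composition of computable functions, so the running time of $A$ on $I$ is at most $h(k_2(I))\cdot p(|I|)$. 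This is precisely the form required to witness fixed-parameter tractability with respect to $k_2$, completing the argument.

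Honestly, there is no real obstacle here: the statement is a purely syntactic manipulation of the definition of FPT together with the upper bound $k_1\leq g(k_2)$. The only small subtlety worth flagging explicitly is the monotonization of $f$, since the standard definition of FPT does not require $f$ to be monotone; handling this up front keeps the inequality $f(k_1(I))\leq f(g(k_2(I)))$ legitimate. Everything else is bookkeeping, and the same argument in fact shows that FPT-reducibility in the parameter (i.e.\ the existence of such a $g$) is a transitive and robust notion.
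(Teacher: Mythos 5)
Your argument is correct and is the standard proof of this fact; the paper itself does not prove the lemma but only cites \cite{Sasak:2010} for it, so there is no in-paper proof to diverge from. Your explicit monotonization of $f$ via $\tilde f(n)=\max_{0\leq i\leq n} f(i)$ is exactly the right (and only) subtlety to flag, and the composition $h=f\circ g$ gives the required bound, so nothing is missing.
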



In \cite{holyer:1981}, Holyer has shown that checking whether a cubic graph is 3-edge-colorable is an NP-complete problem. For a cubic graph $G$, let $r_3(G)$ be defined as:
\[r_3(G)=|E|-\nu_3(G).\]
This parameter is introduced and investigated in \cite{steffen:2004}. In particular, there it is observed there that $r_3(G)\neq 1$ for any cubic graph $G$. This means that $r_3(G)$ can be zero or at least two, and the 3-edge-coloring problem in cubic graphs amounts to deciding which of these two cases holds. For our purposes we will consider the following restriction of 3-edge-coloring problem in cubic graphs:\\

{\bf Problem 3:} For a fixed integer $l\geq 1$, consider a decision problem, whose input is a cubic graph $G$, in which $r_3(G)$ is from the set $\{0,l,l+1, l+2,...\}$. The goal is to check whether $G$ is 3-edge-colorable, that is, whether $r_3(G)=0$.

\begin{lemma}
	\label{lem:AuxProb1} For each fixed $l\geq 1$, Problem 3 is $NP$-complete.
\end{lemma}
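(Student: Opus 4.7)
The plan is to observe that Problem 3 is trivially in $NP$ (a $3$-edge-coloring serves as the certificate), so the content is the $NP$-hardness. I will reduce from Holyer's problem, i.e., from $3$-edge-coloring on cubic graphs, to Problem 3 for the given $l$.

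For $l\in\{1,2\}$, no reduction is needed: since $r_3(G)\in\{0\}\cup\{2,3,\dots\}$ for every cubic graph $G$ (as recalled in the paragraph preceding the statement), every cubic graph is already a valid input to Problem 3 with $l\le 2$, so Problem 3 coincides with Holyer's problem and is therefore $NP$-complete.

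For $l\ge 3$, given a cubic graph $G$, I would let $t=\lceil l/2\rceil$ and output $G'$ defined as the disjoint union of $t$ copies of $G$. Clearly $G'$ is cubic and is computable in polynomial time (indeed in linear time, since $l$ is a fixed constant). The key observation is that $r_3$ is additive over connected components: any $3$-edge-colorable subgraph of a disjoint union is precisely the disjoint union of $3$-edge-colorable subgraphs of the components, so $\nu_3(G')=t\cdot\nu_3(G)$ and hence $r_3(G')=t\cdot r_3(G)$. Consequently, if $G$ is $3$-edge-colorable then $r_3(G')=0$, and if $G$ is not $3$-edge-colorable then $r_3(G)\ge 2$ and therefore $r_3(G')\ge 2t\ge l$. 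In both cases $r_3(G')\in\{0,l,l+1,l+2,\dots\}$, so $G'$ is a legitimate instance of Problem 3; and by construction $G$ is $3$-edge-colorable iff $G'$ is.

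The only delicate point is ensuring that the forbidden range $\{1,2,\dots,l-1\}$ for $r_3(G')$ is actually avoided; this is the reason for choosing $t=\lceil l/2\rceil$ rather than a single copy, and it relies crucially on the lower bound $r_3(G)\ge 2$ for non-$3$-edge-colorable cubic graphs (which rules out any copy contributing exactly $1$ to the deficit). Everything else is routine, so this is where the slight care is needed. Combining the two cases above with membership in $NP$ yields $NP$-completeness of Problem 3 for every fixed $l\ge 1$.
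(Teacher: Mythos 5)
Your proof is correct and follows essentially the same route as the paper: handle $l\le 2$ by noting the promise is vacuous (since $r_3\neq 1$), and for $l\ge 3$ reduce from Holyer's problem by taking disjoint copies of $G$ and using additivity of $r_3$ over components. The only cosmetic difference is that you take $\lceil l/2\rceil$ copies (exploiting $r_3(G)\ge 2$ for non-colorable cubic graphs) where the paper takes $l$ copies; both yield the same conclusion.
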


\begin{proof} The case when $l\leq 2$ corresponds to the usual 3-edge-coloring problem in cubic graphs. Thus, we can assume that $l\geq 3$. We reduce the 3-edge-coloring problem of cubic graphs to this problem. Let $G$ be any cubic graph. Consider a cubic graph $H$ obtained from $l$ vertex disjoint copies of $G$. Observe that $|V(H)|=l\cdot |V|$, hence $H$ can be constructed from $G$ in linear time. Now, it is easy to see that $G$ is 3-edge-colorable if and only if $H$ is 3-edge-colorable. Moreover, $r_3(H)=l\cdot r_3(G)$. Hence, $r_3(H)$ is either zero or at least $l$. The proof is complete.
\end{proof}

We will also need the following result obtained in \cite{samvel:2010,corrigendum}:

\begin{theorem}
	\label{thm:SamvelInEq} For any cubic graph $G$ $\nu_{2}(G) \leq \frac{|V| + 2\nu_{3}(G)}{4}$.
\end{theorem}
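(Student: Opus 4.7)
The plan is to reformulate the inequality and then exhibit an explicit 3-edge-colorable subgraph attaining the required size. Rearranging, the target $4\nu_2(G) \leq |V| + 2\nu_3(G)$ is equivalent to $\nu_3(G) \geq 2\nu_2(G) - |V|/2$, so starting from any maximum 2-edge-colorable subgraph $H_2$ of $G$, I would construct a 3-edge-colorable subgraph whose edge count exceeds $|H_2|$ by at least $\nu_2(G) - |V|/2$.

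I would fix $H_2 = M_1 \cup M_2$ with colour classes $M_1, M_2$, and partition $V$ into three sets $A$, $B$, $C$ of vertices of degree $2$, $1$, $0$ in $H_2$ respectively. Since $G$ is cubic, $\nu_2(G) = |A| + |B|/2$ and $\nu_2(G) - |V|/2 = (|A|-|C|)/2$, while the residual graph $R = G \setminus H_2$ has degree sequence $1^{|A|}2^{|B|}3^{|C|}$. The key observation is that for any matching $M_3 \subseteq R$, the graph $H_2 \cup M_3$ is 3-edge-colorable (colours $1$ and $2$ on $M_1, M_2$, colour $3$ on $M_3$), so it suffices to exhibit a matching of $R$ of size at least $(|A|-|C|)/2$.

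To obtain such a matching I would invoke the maximality of $H_2$ to rule out two kinds of $R$-edges: no $C$-$C$ edge (which could be appended to $H_2$ as an isolated new edge) and no $B$-$C$ edge (which could be added and coloured opposite to the $B$-vertex's $H_2$-edge). Hence every $R$-edge at a $C$-vertex lands in $A$, accounting for exactly $3|C|$ of the $A$-$C$ edges, and these use $3|C|$ distinct $A$-vertices (each $A$-vertex having a single $R$-edge). Counting at $A$ then yields $|A| = 2e_{AA} + e_{AB} + 3|C|$, where $e_{XY}$ denotes the number of $X$-$Y$ edges in $R$. I would then build the matching in three stages: (i) for each $C$-vertex pick one of its three $A$-neighbours, contributing $|C|$ edges; (ii) pair up the $2e_{AA}$ vertices of $A$ whose unique $R$-edge stays inside $A$, contributing $e_{AA}$ edges; (iii) on the bipartite subgraph of $A$-$B$ edges, whose $A$-side has all degrees equal to one and whose $B$-side has degree at most two, greedily select at least $\lceil e_{AB}/2 \rceil$ matching edges. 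Since the $A$-vertices used in the three stages partition $A$ according to the endpoint of their unique $R$-edge, the three partial matchings combine without conflict into a matching of $R$ of size at least $|C| + e_{AA} + e_{AB}/2 = (|A|-|C|)/2$.

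The main obstacle will be the structural analysis of $R$ coming from the maximality of $H_2$: the exclusion of $C$-$C$ and $B$-$C$ edges and the splitting of $A$ into three disjoint classes according to the endpoint of its unique $R$-edge are what make the three-stage construction fit together. The greedy bound $\lceil e_{AB}/2 \rceil$ in stage (iii) is a direct consequence of every $B$-vertex having at most two $R$-edges into $A$; the delicate part is keeping the three stages vertex-disjoint, which hinges on the partition of $A$ established above.
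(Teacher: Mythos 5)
Your argument is correct, and it is worth noting that the paper does not actually prove Theorem \ref{thm:SamvelInEq}: the inequality is imported as a black box from the cited works of Mkrtchyan, Petrosyan and Vardanyan (\cite{samvel:2010,corrigendum}) and used only in the reduction of Theorem \ref{thm:radius}. What you have produced is therefore a self-contained, elementary proof of a statement the paper merely quotes. I checked the steps: with $A,B,C$ the vertices of $H_2$-degree $2,1,0$, the identities $\nu_2(G)=|A|+|B|/2$ and $\nu_2(G)-|V|/2=(|A|-|C|)/2$ are right; maximality of $H_2$ does rule out $C$--$C$ edges (such an edge extends $H_2$ with either colour) and $B$--$C$ edges (colour it opposite to the $B$-endpoint's $H_2$-edge); since every $A$-vertex has exactly one residual edge, the $3|C|$ edges leaving $C$ hit $3|C|$ distinct $A$-vertices and the count $|A|=2e_{AA}+e_{AB}+3|C|$ follows; the three partial matchings are vertex-disjoint precisely because the unique residual edge of each $A$-vertex assigns it to exactly one of the three stages; and in stage (iii) the $A$--$B$ subgraph is a union of stars centred at $B$ with at most two leaves, so a transversal of the stars gives at least $e_{AB}/2$ edges. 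Summing, $\nu_3(G)\geq \nu_2(G)+|C|+e_{AA}+e_{AB}/2=\nu_2(G)+(|A|-|C|)/2=2\nu_2(G)-|V|/2$, which rearranges to the claimed bound. The one caveat is scope: the related-work section states the inequality for cubic \emph{multigraphs}, whereas your stage-(iii) degree argument is phrased for simple graphs; since an $A$-vertex has residual degree one, parallel edges cause no actual trouble, but if you intend the multigraph version you should say so explicitly.
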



%
%

\section{First part of theoretical results}
\label{sec:main}

In this section, we present the first part of our main results about the maximum 2-edge-colorable subgraph problem. If $m$ is the number of edges of the input graph $G$, then clearly we can generate all $2^m$ subgraphs/subsets of $E$, and check each of them for $2$-edge-colorability. In great contrast with $k$-edge-colorability with $k\geq 3$, checking 2-edge-colorability can be done in polynomial time. A subgraph $F$ of $G$ is 2-edge-colorable if and only if it has maximum degree at most two, and it contains no component that is an odd cycle. Clearly this can be checked in polynomial time. The running time of this trivial, brute-force algorithm is $O^{*}(2^m)$. We will refer to this algorithm as trivial or brute-force algorithm.


The first parameter with respect to which we will investigate our problem is the radius of the graph.

\begin{theorem}
	\label{thm:radius} The maximum 2-edge-colorable subgraph problem is paraNP-hard with respect to the $rad(G)$.
\end{theorem}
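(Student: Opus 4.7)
The plan is to reduce from Problem~3 with parameter $l=3$, which by Lemma~\ref{lem:AuxProb1} is NP-complete: the input is a cubic graph $G$ with $r_3(G)\in\{0,3,4,\ldots\}$, and the task is to decide whether $r_3(G)=0$. I would build from $G$ a graph $H$ of constant radius whose $\nu_2$-value distinguishes the two cases. Concretely, let $V(H)=V(G)\cup\{u^*,p\}$ and $E(H)=E(G)\cup\{\{u^*,v\}:v\in V(G)\}\cup\{\{u^*,p\}\}$; that is, $u^*$ is made adjacent to every vertex of $G$ and to a fresh pendant vertex $p$. Since $u^*$ is adjacent to every other vertex of $H$, we have $rad(H)=1$, independently of the input $G$.

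The first ingredient is to compute $\nu_2(G)$ in terms of $r_3(G)$. Applying Theorem~\ref{thm:SamvelInEq} with $|E(G)|=\tfrac{3|V(G)|}{2}$ and $\nu_3(G)=|E(G)|-r_3(G)$ gives $\nu_2(G)\le |V(G)|-\tfrac{r_3(G)}{2}$, so $\nu_2(G)=|V(G)|$ when $r_3(G)=0$ (attained by the union of any two color classes of a proper $3$-edge-coloring, which is a $2$-factor of $G$ whose cycles are all even), while $\nu_2(G)\le|V(G)|-2$ when $r_3(G)\ge 3$. The second ingredient is an upper bound on $\nu_2(H)$: for any 2-edge-colorable subgraph $F$ of $H$, write $F_U$ for the edges of $F$ incident to $u^*$ and $F_G=F\cap E(G)$, and note $|F_U|\le 2$. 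A short case analysis on which of the at most two $F_U$-edges go to $V(G)$ versus to $p$ shows that each $\{u^*,v\}\in F_U$ with $v\in V(G)$ forces $v$ to have degree at most $1$ in $F_G$, which together with a degree-sum argument yields $|F_G|\le |V(G)|-1$ whenever at least one such edge is present. Combining this with the trivial bound $|F_G|\le\nu_2(G)$, one gets $\nu_2(H)\le |V(G)|+1$ when $r_3(G)=0$ and $\nu_2(H)\le|V(G)|$ when $r_3(G)\ge 3$.

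For the matching lower bound in the $3$-edge-colorable case, I would start from a $2$-factor $F_0$ of $G$ with all cycles even, properly 2-edge-color it with colors $1$ and $2$, pick any vertex $v\in V(G)$, delete the color-$1$ edge of $F_0$ incident to $v$, and add the edges $\{u^*,v\}$ with color $1$ and $\{u^*,p\}$ with color $2$; a direct check of all vertex degrees and color constraints shows this is a valid 2-edge-colored subgraph of $H$ of size $|V(G)|-1+2=|V(G)|+1$. Hence $\nu_2(H)\ge|V(G)|+1$ if and only if $r_3(G)=0$, so the maximum 2-edge-colorable subgraph problem is already NP-hard on graphs of radius $1$, which proves paraNP-hardness with respect to $rad(G)$. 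The main delicate point is the choice of parameter $l$: the slack $|V(G)|-2$ in Theorem~\ref{thm:SamvelInEq} for non-3-edge-colorable $G$ (which requires $l\ge 3$ in Problem~3) is exactly what absorbs the possible $+2$ contribution from the two edges at $u^*$, and it is what makes the two cases separate.
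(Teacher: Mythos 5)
Your proposal is correct and follows essentially the same route as the paper's proof: a reduction from Problem~3 via a universal apex vertex forcing $rad=1$, with Theorem~\ref{thm:SamvelInEq} supplying the upper bound on $\nu_2$ in the non-$3$-edge-colorable case. The only differences are cosmetic refinements: your extra pendant vertex and the sharper degree-sum analysis around $u^*$ let you get away with $l=3$ and threshold $|V(G)|+1$, whereas the paper simply bounds $\nu_2(G')\le 2+\nu_2(G)$ and takes $l\ge 6$ with threshold $|V(G)|$.
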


\begin{proof} We present a reduction from Problem 3 with $l\geq 6$. By Lemma \ref{lem:AuxProb1} it is $NP$-complete. Let us take an arbitrary cubic graph $G$ with $r_3(G)$ either zero or at least $l$. Take a new vertex $z$, who is joined to every vertex of $G$. Let $G'$ be the resulting graph.
	
	Let us show that $\nu_2(G')\geq |V|$ if and only if $G$ is 3-edge-colorable. Let $G$ be a 3-edge-colorable. Then it admits a pair of edge-disjoint perfect matchings. Hence, these perfect matchings form a 2-edge-colorable subgraph in $G'$. Thus, $\nu_2(G')\geq |V|$. Now, assume that $G$ is not 3-edge-colorable, hence $r_3(G)\geq l\geq 6$. By Theorem \ref{thm:SamvelInEq}:
	
	\begin{align*}
	&\nu_2(G')\leq 2+ \nu_2(G)\leq 2+ \frac{|V|+2\cdot \nu_3(G)}{4}=2+|V|-\frac{r_3(G)}{2}\leq |V|-1,
	\end{align*}
	
	since $r_3(G)\geq l\geq 6$. Hence, if $\nu_2(G')\geq |V|$, then $G$ is 3-edge-colorable.
	
	Observe that in graphs $G'$ that we obtained from $G$, we have $rad(G')=1$ ($z$ is of distance one from any other vertex). Thus, checking whether $\nu_2(G')=|V|$ is an NP-complete problem even when the radius is one. Thus the problem is paraNP-hard with respect to the radius. The proof is complete.
\end{proof}

\begin{remark}
	\label{rem:diam} The maximum 2-edge-colorable subgraph problem is paraNP-hard with respect to the $diam(G)$.
\end{remark} This follows from Theorem \ref{thm:radius}, Lemma \ref{lem:Redparam} and the fact that in any graph $G$, we have 
\[rad(G)\leq diam(G)\leq 2\cdot rad(G).\]

In \cite{kEdgeColoringFPT}, it is shown that for each fixed $k$, the $k$-edge-coloring problem is FPT with respect to the number of maximum degree vertices of the input graph. As we have mentioned previously, the maximum $k$-edge-colorable subgraph problem is harder than $k$-edge-coloring. Thus, one can try to parameterize the latter with respect to the number of vertices of maximum degree. As the following theorem states, if $P\neq NP$, this is impossible.

\begin{theorem}
	\label{thm:HardnessMax2SubgraphMaxDegree} The maximum 2-edge-colorable subgraph problem is paraNP-hard with respect to the number of maximum-degree vertices.
\end{theorem}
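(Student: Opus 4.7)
The plan is to observe that the four parameters in the theorem statement split naturally into two groups, each handled by a minimal adjustment of reductions already developed in the paper. For the parameters $\delta$ and $\Delta$, I would invoke the fact (remarked in Section~\ref{sec:motwork_tasks}) that on a cubic graph $G$ one has $\nu_2(G)=|V|$ if and only if $G$ decomposes into two edge-disjoint perfect matchings, which is in turn equivalent to $G$ being $3$-edge-colorable, an NP-complete problem by Holyer. Since every cubic graph has $\delta=\Delta=3$, the maximum $2$-edge-colorable subgraph problem is NP-hard with both $\delta$ and $\Delta$ fixed at $3$, yielding paraNP-hardness with respect to each of these two parameters.

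For $|V|-\Delta$ and for the number of maximum-degree vertices, I would recycle the construction from the proof of Theorem~\ref{thm:radius}: starting from a cubic graph $G$ with $r_3(G)\in\{0\}\cup\{l,l+1,\ldots\}$ for some fixed $l\geq 6$ (an instance of Problem~3, NP-complete by Lemma~\ref{lem:AuxProb1}), adjoin a new vertex $z$ adjacent to every vertex of $G$, and let $G'$ denote the resulting graph. In $G'$, the vertex $z$ has degree $|V|$ while every original vertex has degree exactly $4$, so for $|V|\geq 5$ the vertex $z$ is the unique maximum-degree vertex. Consequently the number of maximum-degree vertices of $G'$ equals $1$, and $|V(G')|-\Delta(G')=(|V|+1)-|V|=1$; both parameters are constants independent of the input size.

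Correctness of this second reduction is the argument already contained in Theorem~\ref{thm:radius}: if $G$ is $3$-edge-colorable, two of its color classes are edge-disjoint perfect matchings whose union is a $2$-edge-colorable subgraph of $G'$ with $|V|$ edges, so $\nu_2(G')\geq |V|$; conversely, if $r_3(G)\geq 6$, then $\nu_2(G')\leq 2+\nu_2(G)\leq 2+\frac{|V|+2\nu_3(G)}{4}=2+|V|-\frac{r_3(G)}{2}\leq |V|-1$ by Theorem~\ref{thm:SamvelInEq}. Hence deciding $\nu_2(G')\geq |V|$ is NP-hard even when the parameter is bounded by $1$.

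I do not anticipate any genuine obstacle, since the entire argument reuses machinery developed earlier in the section. The only housekeeping point is ensuring that the cubic input is large enough for $z$ to strictly dominate the degree sequence of $G'$; this is easily arranged by taking a disjoint union of sufficiently many copies of $G$ (as already done inside the proof of Lemma~\ref{lem:AuxProb1}), a padding step that preserves both $3$-edge-colorability and the dichotomy on $r_3$.
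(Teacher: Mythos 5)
Your proposal is correct and follows essentially the same route as the paper: the paper's own proof of Theorem~\ref{thm:HardnessMax2SubgraphMaxDegree} also reuses the apex construction $G'$ from Theorem~\ref{thm:radius} and observes that $z$ is the unique maximum-degree vertex once $|V|\geq 6$ (with $K_4$ as the only small exception, giving at most five such vertices), while the claims for $\delta$, $\Delta$ and $|V|-\Delta$ are handled exactly as you do in the surrounding text via Holyer's result on cubic graphs and the identity $|V(G')|-\Delta(G')=1$. Your padding remark is a harmless alternative to the paper's way of dismissing the $K_4$ case.
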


\begin{proof} Consider the class of graphs $G'$ from the proof of Theorem \ref{thm:radius}. Observe that if $G$ is the complete graph on four vertices then $G'$ has five vertices of degree four, which have maximum degree in $G'$. On the other hand, if $|V|\geq 6$, then $z$ is the only vertex of maximum degree. Thus, the problem is NP-hard when the number of maximum degree vertices is at most five. The proof is complete.
\end{proof}

\begin{remark}
	\label{rem:NotAllMaxDegrees} Observe that in the above proof, there is no need for us to join $z$ to all the vertices of $G$. Since $G$ is cubic we can join $z$ to five vertices of $G$. This will lead to the graph $G'$, where $z$ is the only vertex of degree five, which is maximum. All other vertices are of degree four or three. Thus, the problem remains hard even when the number of maximum degree vertices is one and the maximum degree is five.
\end{remark}

Holyer's result \cite{holyer:1981} implies that it is $NP$-hard to find a maximum 2-edge-colorable subgraph in cubic graphs. Thus, the maximum 2-edge-colorable subgraph problem is paraNP-hard with respect to $\Delta(G)$ and $\delta(G)$. Moreover, in the proof of Theorem \ref{thm:radius}, we have $|V(G')|=|V|+1$ and $\Delta(G')=d(z)=|V|$, hence $|V(G')|-\Delta(G')=1$ in these graphs $G'$. Thus, one can say that it is paraNP-hard with respect to $|V|-\Delta$, too. On the positive side, it turns out that
\begin{proposition}
	\label{prop:Vminusdelta} The maximum 2-edge-colorable subgraph problem is FPT with respect to $|V|-\delta$.
\end{proposition}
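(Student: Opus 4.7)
The plan is a clean case split on $|V|$ relative to the parameter $k := |V|-\delta$. When $|V|$ is small compared to $k$, the whole instance is bounded by a function of $k$ alone and brute force suffices; when $|V|$ is large compared to $k$, the hypothesis $\delta\ge |V|/2$ forces $G$ to be Hamiltonian by Dirac's theorem, and the answer can essentially be read off.

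In the first regime, suppose $|V|\le 2k-1$. Then $|E|\le \binom{|V|}{2}\le \binom{2k-1}{2} < 2k^2$, so I would run the brute-force algorithm described at the beginning of Section~\ref{sec:main}: enumerate the $2^{|E|}$ subsets of $E$, discard those not having maximum degree at most two or containing an odd-cycle component, and return the largest surviving subgraph together with its (easily computed) 2-edge-coloring. This runs in time $O^{*}(2^{2k^2})$, well within the bound claimed in Table~\ref{tab:1}.

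In the second regime, suppose $|V|\ge 2k$. Then $\delta\ge |V|-k\ge |V|/2$, so Dirac's theorem yields a Hamiltonian cycle $C$ in $G$, which can be produced in polynomial time by a standard constructive version of the theorem. If $|V|$ is even, then $C$ is an even cycle, so the alternating 2-edge-coloring witnesses $\nu_2(G)=|V|$, and this is optimal because any 2-edge-colorable subgraph is the union of two matchings each of size at most $|V|/2$. If $|V|$ is odd, every matching has at most $(|V|-1)/2$ edges, so $\nu_2(G)\le|V|-1$; removing any single edge from $C$ produces a Hamiltonian path attaining this bound.

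The only conceptual move is recognizing that ``$|V|-\delta$ small'' splits cleanly into ``graph already small'' and ``graph dense enough for Dirac''; the only technical ingredient is the constructive form of Dirac's theorem, and everything else is bookkeeping. Combining the two regimes gives the promised $2^{O(k^2)}\cdot \mathrm{poly}(|V|)$ running time.
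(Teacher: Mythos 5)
Your proposal is correct and follows essentially the same argument as the paper: the identical case split on whether $|V|-\delta$ is at least or at most $|V|/2$, brute force in the sparse regime, and a Hamiltonian cycle (the paper invokes Ore's theorem, whose hypothesis is implied by your Dirac condition $\delta\ge|V|/2$) with the same even/odd analysis in the dense regime. No substantive difference.
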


\begin{proof} Let $G$ be any graph. If $|V|-\delta(G)\geq \frac{|V|}{2}$, then 
	\[|V|\leq 2\cdot (|V|-\delta(G)).\]
	Thus,
	\[|E|\leq |V|^2\leq 4\cdot (|V|-\delta(G))^2.\]
	Now, if we run the trivial algorithm, its running-time will depend solely on $|V|-\delta$, as we have bounded the number of edges in terms of it. On the other hand, if $|V|-\delta(G)\leq \frac{|V|}{2}$, then 
	\[\delta(G)\geq \frac{|V|}{2}.\]
	Thus, by Ore's classical theorem \cite{west:1996}, $G$ has a Hamiltonian cycle $C$. Now, if $|V|$ is even, then $C$ is a 2-edge-colorable subgraph in $G$. Since in any graph $G$, $\nu_2(G)\leq |V|$, we have that $C$ is a maximum 2-edge-colorable subgraph in $G$. On the other hand, if $|V|$ is odd, then any matching in $G$ has at most $\frac{|V|-1}{2}$ edges, hence $\nu_2(G)\leq |V|-1$. Now, if we remove any edge from $C$, then the resulting Hamiltonian path will be a 2-edge-colorable subgraph with $|V|-1$ edges. Hence it will be a maximum 2-edge-colorable subgraph in $G$. The proof is complete.
\end{proof}

\begin{remark}
	Let us note that the proof of Ore's theorem represents a polynomial time algorithm which actually finds the Hamiltonian cycle. Thus, in the second case of the previous proof, the algorithm will run in polynomial time.
\end{remark}

Observe that in any graph $G$, we have the following relationship among vertex, edge connectivity and minimum degree:
\[\kappa(G)\leq \kappa'(G)\leq \delta(G).\]
Since, the maximum 2-edge-colorable subgraph problem is paraNP-hard with respect to $\delta$, Lemma \ref{lem:Redparam} implies that the problem is paraNP-hard with respect to $\kappa(G)$ and $\kappa'(G)$. Moreover, since in any graph
\[ |V|- \delta(G)\leq |V|-\kappa'(G)\leq |V|-\kappa(G),\]
Proposition \ref{prop:Vminusdelta} and Lemma \ref{lem:Redparam} imply that the problem is FPT with respect to $|V|-\kappa(G)$ and $|V|-\kappa'(G)$.

In the following, we show that the maximum 2-edge-colorable subgraph problem is fixed-parameter tractable with respect to the \textit{treewidth} of the graph. We first recall some basic definitions. Informally, a \textit{tree decomposition} of a graph $G$ is a way of representing $G$ as a tree-like structure. 

\begin{definition}[\cite{fomin:2010}]\label{def:path_decomposition2}
A tree decomposition of a graph $G=(V,E)$ is a pair $(T, \{X_i\}_{i\in V(T)})$, where $T$ is a tree, and for each vertex $i$ of $T$, $X_i$ is a subset of vertices of $G$ called \textit{bags}, such that: (i) for every $u\in V$ there exists $i\in V(T)$ with $u\in X_i$; (ii) for every $(u,v)\in E$, there exists $i\in V(T)$ with $u,v\in X_i$; (iii)  for every $u\in V$, the set $V_u(T)=\{i\in V(T)|u \in X_i\}$ forms a connected subgraph (subtree) of $T$.
\end{definition}

%


The \textit{width} of a tree decomposition $(T, \{X_i\}_{i\in V(T)})$ equals  $\max_{i\in V(T)}|X_i|-1$, and the \textit{treewidth} of a graph $G$, is the minimum width of a tree decomposition of $G$. An example of tree decomposition is reported in Figure~\ref{fig:treedecomposition} for the graph depicted in Figure~\ref{fig:treewidth}. 
\begin{figure}[t]
\centering
\begin{minipage}[b]{0.43\textwidth}
\includegraphics[width=\textwidth]{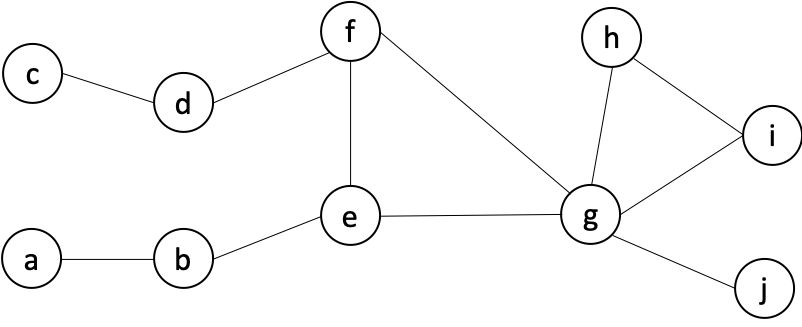}\caption{An example of a graph having treewidth 2.}\label{fig:treewidth}
\end{minipage}
\hfill
\begin{minipage}[b]{0.32\textwidth}
\includegraphics[width=\textwidth]{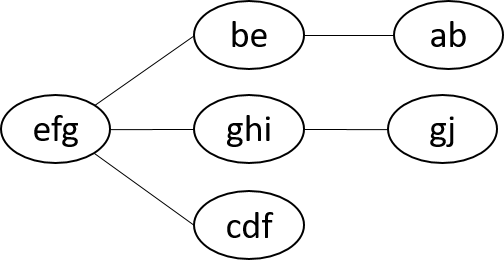}
\caption{A tree decomposition of width 2 for the graph shown in Figure~\ref{fig:treewidth}.}\label{fig:treedecomposition}
\end{minipage}
\end{figure}

\begin{theorem}
\label{thm:TreeWidth} Given a tree decomposition of width $h$, the maximum 2-edge-colorable subgraph problem is FPT with respect to $h$. In other words, the problem is FPT with respect to the treewidth.
\end{theorem}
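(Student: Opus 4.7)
The plan is to run standard dynamic programming over a nice tree decomposition, with the state at every bag capturing exactly the local information needed to stitch partial $2$-edge-colourings together. First, I would convert the given tree decomposition of width $h$ into a nice tree decomposition of the same width in linear time, so that every internal node is of one of the five standard types: leaf, introduce-vertex, introduce-edge, forget-vertex, and join. For each node $t$ with bag $X_t$ and processed subgraph $G_t$ (the subgraph induced by the vertices and edges already introduced in the subtree rooted at $t$), I would build a table $M_t(\sigma)$ indexed by functions $\sigma\colon X_t\to\{\emptyset,\{1\},\{2\},\{1,2\}\}$, where $M_t(\sigma)$ is the maximum number of coloured edges over all partial proper $2$-edge-colourings of subgraphs of $G_t$ such that, for every $v\in X_t$, $\sigma(v)$ equals the set of colours currently used on edges of $G_t$ incident to $v$; the value is $-\infty$ if no such partial colouring exists. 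This gives at most $4^{h+1}$ entries per bag, and correctness relies on the observation that a proper $2$-edge-colouring is characterised by the purely vertex-local constraints ``at most one incident edge of colour $1$'' and ``at most one incident edge of colour $2$''. No odd-cycle, path-endpoint, or parity information has to be tracked separately, so the colour sets alone summarise what the rest of the graph needs to know about the processed portion.

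The transitions are then routine. At a leaf, the empty state has value $0$. At an introduce-vertex node for $v$, the only legal extension of the child's table sets $\sigma(v)=\emptyset$. At an introduce-edge node for $uv$, $M_t(\sigma)$ equals the maximum of (i) the child value at $\sigma$ (edge left uncoloured) and (ii) for each $c\in\sigma(u)\cap\sigma(v)$, the child value at the state obtained from $\sigma$ by removing $c$ from both $\sigma(u)$ and $\sigma(v)$, plus one. At a forget-vertex node for $v$, one takes the maximum of the child entries over the four possible extensions of $\sigma$ to $v$. At a join node with children $t_1, t_2$ sharing $X_t$, one sets $M_t(\sigma) = \max\{M_{t_1}(\sigma_1)+M_{t_2}(\sigma_2)\}$ over pairs $(\sigma_1,\sigma_2)$ with $\sigma_1(v)\cup\sigma_2(v)=\sigma(v)$ and $\sigma_1(v)\cap\sigma_2(v)=\emptyset$ for every $v\in X_t$; this constraint expresses exactly that the colour classes produced by the two subtrees are edge-disjoint and that their union still forms two matchings. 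The final answer is $\max_{\sigma} M_r(\sigma)$ at the root $r$, and a standard traceback recovers an optimal colouring.

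The bulk of the work is a routine induction proving that $M_t(\sigma)$ really is the claimed optimum. The step I expect to be most error-prone is the join node: the partition constraint must be enforced per vertex rather than globally, and one has to confirm that no edge of $G_t$ is processed twice. Because every edge of the input graph is introduced at a unique introduce-edge node in a nice decomposition, this comes out cleanly. Each node is processed in time $c^{h}$ for an absolute constant $c$ (the join node is the bottleneck), and a nice tree decomposition has $O(|V|)$ nodes, giving a total running time of the form $f(h)\cdot |V|$, which establishes fixed-parameter tractability with respect to the treewidth.
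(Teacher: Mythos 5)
Your proof is correct, but it takes a genuinely different route from the paper's proof of this particular theorem. The paper proves Theorem~\ref{thm:TreeWidth} by expressing the problem in monadic second-order logic -- ``maximize $|S|$ such that $S$ partitions into two sets $A$, $B$ each satisfying a $Match$ predicate'' -- and invoking Courcelle's theorem, which yields FPT immediately but with a non-explicit and typically astronomical dependence $f(h)$. You instead run an explicit dynamic program over a nice tree decomposition, with states recording, for each bag vertex, the subset of $\{1,2\}$ already used on incident processed edges. Your key observation -- that a proper partial $2$-edge-colouring is characterised by purely vertex-local constraints, so no parity or odd-cycle bookkeeping is needed -- is exactly right: by tracking the colouring rather than the underlying subgraph, odd cycles are excluded automatically, and the colour sets at bag vertices are a sufficient interface. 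The transitions, including the per-vertex disjoint-union condition at join nodes (sound because each edge is introduced at a unique introduce-edge node), are all correct. What your approach buys is a concrete single-exponential bound of the form $c^{h}\cdot\mathrm{poly}(|V|)$; what it costs is the page of routine case analysis that Courcelle's theorem hides. Notably, the paper itself makes the same trade-off later: Section~\ref{sec:dynamic} gives an explicit dynamic program over branch decompositions (Theorem~\ref{th:BranchWidth}) precisely to replace the Courcelle bound, and derives the treewidth bound $O(6^{\frac{4}{3}(h+1)}(2|V|-1))$ in Theorem~\ref{thm:TreeWidth2}, with states that play the same role as your colour sets (augmented by a dummy colour). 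So your argument essentially anticipates the paper's second, quantitative proof rather than its first, logical one. Two minor points: the number of nodes of a nice tree decomposition with introduce-edge nodes is $O(h\cdot|V|)$ rather than $O(|V|)$, which does not affect the FPT claim; and it would be worth one sentence noting that the maximum number of properly $2$-colourable edges equals $\nu_2(G)$ by definition, so the DP optimum is indeed the quantity sought.
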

\begin{proof} Our main tool for proving this theorem is Courcelle's theorem (see section 7.4 of \cite{FPTbook}), which states that any graph property that can be defined on monadic second-order logic can be tested in time $f(h)\cdot poly(size)$. Here $f$ is some computable function and $size$ is the length of the input. Thus, in order to complete the proof, we need to express the maximum 2-edge-colorable subgraph problem in the monadic second-order logic.

First, we write that the subgraph has to be partitioned into two matchings:\\

$\max|S|:S\subseteq E,\text{ } \exists_{A, B\subseteq S} \text{ }  Partition(A,B)\text{ } \land \text{ }  Match(A)\text{ } \land \text{ } Match(B).$\\


Now, let us present the formula for $Partition(A,B)$ that states that $S$ is partitioned into $A$ and $B$:
\[Partition(A,B)=\forall_{e\in S}\text{ } (e\in A \text{ } \land \text{ } e\notin B)\text{ } \lor \text{ }(e\notin A\text{ }\land \text{ } e\in B).\]
Finally, let us write a formula that states that a subset is a matching:
\[Match(C)=\forall_{e_1, e_2\in C} (adj(e_1, e_2)\rightarrow e_1=e_2),\]
where
\[adj(e_1, e_2)=\exists_{u,v, w\in V}\text{ } e_1=\{u,v\}\text{ } \land \text{ } e_2=\{w,v\}.\]
The proof is complete.
\end{proof}

The parameterization with respect to the treewidth implies the folllowing:

\begin{corollary}
\label{cor:FPTnu1} The maximum 2-edge-colorable subgraph problem is FPT with respect to $\nu(G)$.
\end{corollary}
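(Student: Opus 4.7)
The plan is to deduce this corollary from Theorem \ref{thm:TreeWidth} via the reduction principle stated in Lemma \ref{lem:Redparam}. To apply that lemma, I need to exhibit a computable function $g$ such that, for every graph $G$, the treewidth satisfies $tw(G) \leq g(\nu(G))$. Then treewidth-FPT yields $\nu(G)$-FPT.

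First I would bound the vertex cover number by the matching number. The classical observation is that if $M$ is a maximum matching in $G$, then the set $S = V(M)$ of its $2\nu(G)$ endpoints is already a vertex cover: any edge $e=(u,v)$ with $u,v \notin S$ could be added to $M$, contradicting maximality. Hence $vc(G) \leq 2\nu(G)$.

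Next I would bound the treewidth by the vertex cover number. Given a vertex cover $S$ of size $c$, I build a tree decomposition whose underlying tree is any tree (say a star) on the vertices $V\setminus S$, with bag $X_v = S \cup \{v\}$ attached to each $v \in V\setminus S$. Every edge has at least one endpoint in $S$, so it is contained in some bag; and for every vertex of $S$ the collection of bags containing it is all of $V(T)$, hence connected. This gives $tw(G) \leq |S| = c$, so $tw(G) \leq vc(G) \leq 2\nu(G)$.

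Finally I would combine the pieces. A maximum matching (and hence an explicit vertex cover $S$ of size $\leq 2\nu(G)$) can be computed in polynomial time, and the above construction yields a tree decomposition of width at most $2\nu(G)$ explicitly in polynomial time, so we do not even need Bodlaender's algorithm here. Plugging this decomposition into the algorithm of Theorem \ref{thm:TreeWidth} gives a running time of the form $f(\nu(G)) \cdot poly(size)$, which is precisely FPT with respect to $\nu(G)$; Lemma \ref{lem:Redparam} now concludes the argument formally. No step is really an obstacle, since the bounds $tw \leq vc \leq 2\nu$ are standard; the only thing to be careful about is not invoking a black-box treewidth computation when the vertex-cover construction already yields a decomposition of the required width constructively.
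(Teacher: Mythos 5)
Your proof is correct and follows essentially the same route as the paper, which deduces the corollary from Theorem \ref{thm:TreeWidth} and Lemma \ref{lem:Redparam} by bounding treewidth in terms of $\nu(G)$ (the paper delegates the bound to \cite{Sasak:2010}, whereas you supply the standard chain $tw(G)\leq vc(G)\leq 2\nu(G)$ explicitly). Your additional observation that the vertex-cover construction yields an explicit tree decomposition, so no black-box treewidth computation is needed, is a pleasant refinement but does not change the argument.
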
 This just follows from Theorem \ref{thm:TreeWidth} and Lemma \ref{lem:Redparam} as it is demonstrated in \cite{Sasak:2010}.

Let $\alpha'(G)$ be the smallest number of edges of $G$ such that any vertex of $G$ is incident to at least one of these edges. By the classical Gallai theorem \cite{west:1996}, we have that if the graph $G$ has no isolated vertices, then
\[\nu(G)+\alpha'(G)=|V|.\]
Since $\nu(G)\leq \frac{|V|}{2}$, we have
\[\nu(G)\leq \frac{|V|}{2} \leq \alpha'(G).\]
Thus, Corollary \ref{cor:FPTnu1} and Lemma \ref{lem:Redparam} imply that the maximum 2-edge-colorable subgraph problem is FPT with respect to $\alpha'(G)$. Observe that isolated vertices play no role in the maximum $k$-edge-colorable subgraph problem, thus we can assume that the input graph contains none of them.

Also, observe that the parameterization with respect to $\alpha'(G)$ can be interpreted as parameterization with respect to $|V|-\nu(G)$. One may wonder, whether we can strengthen this result, by showing that the maximum 2-edge-colorable subgraph problem is FPT with respect to $|V|-2\cdot \nu(G)$? The answer to this question is negative unless $P=NP$. If a cubic graph $G$ is 3-edge-colorable, then it must be bridgeless. Thus, by Holyer's result the maximum 2-edge-colorable subgraph problem is $NP$-hard for bridgeless cubic graphs. By the classical Petersen theorem \cite{west:1996}, bridgeless cubic graphs have a perfect matching. Thus, in this class we have $|V|-2\nu(G)=0$. Hence, the maximum 2-edge-colorable subgraph problem is paraNP-hard with respect to $|V|-2\nu(G)$.

One can consider the decision version of the maximum 2-edge-colorable subgraph problem, where for a given graph $G$ and an integer $t$, one needs to check whether $\nu_2(G)\geq t$. It turns out that this problem is FPT with respect to $t$. In order to see this, just observe that if in the input graph $G$ $\nu(G)\geq t$, then clearly $\nu_2(G)\geq \nu(G)\geq t$, hence the instance is a ``yes" instance. On the other hand, if $\nu(G)\leq t$, then the FPT algorithm with respect to $\nu(G)$ (Corollary \ref{cor:FPTnu1}) will in fact be an FPT algorithm with respect to $t$ (Lemma \ref{lem:Redparam}).


Below we will parameterize the maximum 2-edge-colorable subgraph problem with respect to the dimension of the cycle space of a graph. Before we start, we recall some concepts. Let $B=\{0,1\}$ be the binary field. If $K$ is a subset of edges of a graph $G$, then we can consider an $|E|$-dimensional vector $x_K$ whose coordinates are zero and one, and for any edge $e$ of $G$, we have $e\in E'$, if and only if $x_K(e)=1$. Here $x_K(e)$ denotes the coordinate of $x_K$ corresponding to $e$. The vector $x_K$ is usually called the characteristic vector of $K$. Observe that the characteristic vectors of all subsets of $E$ form a $|E|$-dimensional linear space over $B$. Now, let us consider the cycle space $C(G)$ of $G$, which is defined as the linear hull of all characteristic vectors that correspond to simple cycles of $G$. Clearly, the cycle space is a linear subspace of all characteristic vectors. A classical result in the area states that if $G$ is any graph with $k$ components, then the dimension of $C(G)$ is given by the following formula:
\[dim(C(G))=|E|-|V|+k.\]
An alternative way of looking at $dim(C(G))$ is the following: a subset $F$ of edges of a graph $G$ is called a feedback edge-set, if $G-F$ contains no cycles. In other words, any cycle of $G$ contains an edge from $F$. It turns out that $dim(C(G))$ represents the size of a smallest feedback edge-set of $G$. Moreover, there is a polynomial time algorithm that finds such a subset of edges for any input graph $G$.

For our parameterization, we will require the following lemma:

\begin{lemma}
\label{lem:ForestConstraints} Let $G$ be a forest, and let $S$ be a set of vertices of $G$. Assume that $S=S_1\cup S_2$ is a partition of $S$. Then there is a linear time algorithm that finds a largest 2-edge-colorable subgraph such that the vertices of $S_j$ are not incident to edges with color $j$ ($j=1,2$). 
\end{lemma}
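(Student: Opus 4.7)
My plan is standard bottom-up dynamic programming, applied independently to each tree of the forest. Root each tree at an arbitrary vertex. For every vertex $v$ and every state $(a_1,a_2)\in\{0,1\}^2$, let $f[v][a_1][a_2]$ denote the maximum number of edges of a valid subgraph of the subtree rooted at $v$ (properly 2-edge-coloured and respecting $S_1,S_2$) in which $v$ is incident to a colour-$j$ edge precisely when $a_j=1$. Call the state $(a_1,a_2)$ \emph{admissible} at $v$ unless $v\in S_j$ and $a_j=1$ for some $j\in\{1,2\}$, in which case it violates the constraint of the lemma and we set $f[v][a_1][a_2]=-\infty$.

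The base case is a leaf: $f[v][0][0]=0$ and any other admissible entry is $-\infty$. For an internal vertex $v$ with children $c_1,\dots,c_t$, introduce a running table $g$, initially $g[0][0]=0$ and $g[\cdot][\cdot]=-\infty$ elsewhere, and incorporate children one by one. When the next child $c$ is being incorporated, for every current entry $g[a_1][a_2]$ we consider three mutually exclusive options for the edge $vc$. The first, omitting $vc$, contributes $\max_{(b_1,b_2)\text{ admissible}} f[c][b_1][b_2]$ and leaves the state $(a_1,a_2)$ unchanged. The second, colouring $vc$ with colour $1$, is allowed only when $a_1=0$, $v\notin S_1$, and $c\notin S_1$; it contributes $1+\max_{b_2} f[c][0][b_2]$ and sets the new state to $(1,a_2)$. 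The third, colouring $vc$ with colour $2$, is symmetric. We keep the best value for each resulting state; after all children are processed, we set $f[v]\gets g$.

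Since the number of states is constant and each edge is processed only once (when its child endpoint is merged into the parent's table), the whole DP runs in time $O(|V|+|E|)$, which is linear because $G$ is a forest. The answer for the tree rooted at $r$ equals $\max_{(b_1,b_2)\text{ admissible}} f[r][b_1][b_2]$; summing over all trees of the forest yields the desired maximum, and an actual largest valid subgraph together with its $2$-colouring is recovered by storing back-pointers at each DP transition and traversing top-down from each root.

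Correctness follows by induction on the height of each rooted tree: in a tree the only interaction between sibling subtrees is through their common parent, and the state $(a_1,a_2)$ records exactly the information needed at that interface. The main piece of bookkeeping, and essentially the only subtle point, is tracking the $S_1$- and $S_2$-constraints simultaneously at both endpoints of each edge during the transitions; this reduces to checking two boolean flags per endpoint and is entirely routine.
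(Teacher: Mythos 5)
Your proposal is correct and follows essentially the same route as the paper's own proof: a bottom-up dynamic program on each rooted tree whose state records which true colours are already incident to the current vertex, with children merged one at a time into a running table and the $S_1,S_2$ constraints enforced by declaring the offending states infeasible. The only difference is cosmetic bookkeeping --- you encode the state as a boolean pair $(a_1,a_2)$ while the paper uses subsets of $\{\mathit{0},\mathit{1},\mathit{2}\}$ restricted to an allowed palette $W(u)$ --- and both yield the same linear running time.
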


\begin{proof} Clearly we can assume that $G$ is a tree, otherwise we can solve the problem on each component of the forest, then take the union of these solutions. Moreover, the constraints on the vertices in $S$ can be seen as: only color $\mathit{2}$ can be used on the edges incident to a vertex in $S_1$; and only color $\mathit{1}$ can be used on the edges incident to a vertex in $S_2$. In the following, we say that a color is \textit{incident} to a vertex, if it is used at least on one of the edges incident to the vertex. In order to describe the algorithm, we add the extra dummy color $\mathit{0}$ that means 'not colored', so the set of available colors becomes $\{\mathit{0},\mathit{1},\mathit{2}\}$, where $\mathit{1}$, and $\mathit{2}$ are called true colors.

Let $G(V,E)$ be a tree with $n=|V|$ vertices and $n-1=|E|$ edges. Assume that $W\colon V\to 2^{\{\mathit{0},\mathit{1},\mathit{2}\}}$ is an allocation of the available colors for each vertex in $V$. Let $p\colon \{\mathit{0},\mathit{1},\mathit{2}\}\to \{0,1\}$ be a profit function that is equal to 0, if the input is the dummy color $\mathit{0}$, and is 1, otherwise. Let $r$ be the root of $G$, and let $G(u)$ be the subgraph induced by $u$ and all the descendants of $u$ in $G$.

Now, we describe a \textit{dynamic programming algorithm} that finds a largest 2-edge-colorable subgraph on $G=(V,E)$, with the additional constraints where the colors that can be \textit{incident} to each vertex  $u\in V$ are the ones in $W(u)$. We call this problem $\mathcal{P}$. Clearly, we can write the original problem of the lemma as $\mathcal{P}$ with a specific color allocation, where $W(u)=\{\mathit{0},\mathit{2}\}$ for each vertex $u\in S_1$; $W(u)=\{\mathit{0},\mathit{1}\}$ for each vertex $u\in S_2$; and $W(u)=\{\mathit{0},\mathit{1},\mathit{2}\}$ for each vertex $u\in V\setminus S$.	

In the algorithm, we compute $f(u,A)$ for each vertex $u\in V$, and for every $A\subseteq W(u)$, which is equal to the optimum value of $\mathcal{P}$ restricted to the subgraph $G(u)$, and with the additional constraint where the colors incident to $u$ are those in $A$. If there is no solution, we set $f(u,A)=-\infty$. In particular, the algorithm starts from the leaves and goes up to the {root} $r$, and the optimal value of $\mathcal{P}$ is the maximum $f(r,A)$, for every $A\subseteq W(r)$.

If $u$ is a leaf, and $A$ a specific color subset of $W(u)$, one of the two conditions holds: $f(u,A)=0$, if $A=\emptyset$;  $f(u,A)=-\infty$, otherwise. In fact, since there are no edges in $G(u)$ when $u$ is a leaf, there cannot be any color incident to $u$ in $G(u)$.

For each internal vertex $u$, we suppose $A\neq \emptyset$, as it is always possible to use the dummy color $\mathit{0}$. If $u$ is an internal vertex with $t$ sons $\{v_1,\ldots,v_t\}$, we compute $f(u,A)$ for any subset $A\subseteq W(u)$ by using the values $f(v_i,A)$ for every $u$'s son $v_i$. For every $i\in\{1,\ldots,t\}$, denote by $V_i$ the set containing the vertices in the subgraph $G(u)$ minus the vertices in each subgraph $G(v_j)$, with $j\in \{i+1,\ldots,t\}$. Let $G(V_i)$ be the subgraph induced by the vertices in $V_i$. For every $i\in\{1,\ldots,t\}$ we compute $h(u,V_i,A)$, which equals the maximum value of $\mathcal{P}$ restricted to the subgraph $G(V_i)$, and with the following additional constraint, where the colors incident to $u$ are those in $A$. If there is no solution, we set $h(u,V_i,A)=-\infty$. Notice that  $h(u,V_t,A)$ is equivalent to $f(u,A)$. Now, we see how to compute $h(u,V_i,A)$ for every $i\in\{1,\ldots,t\}$, recursively.

If $i=1$, there is only one edge incident to $u$ in $G(v_1)$, i,.e., $(u,v_1)$. So, we can set $A=\{c(u,v_1)\}$, where $A$ contains only the color assigned to $(u,v_1)$. We compute $h(u,V_1,A)$ solving the following problem.

\begin{equation} 
\begin{aligned}  \label{eq:tree_S_g1}
& {\text{max}}
& &  f(v_1,C) + p(c(u,v_1)) \\
& \text{s.t.} & &C\cap  \{c(u,v_1)\} \subseteq \{\mathit{0}\}\\  
& & & C\subseteq W(v_1)
\end{aligned}
\end{equation}
%


In fact, for a specific color $c(u,v_1)$, we get the best value $f(v_1,C)$ for every $C\in W(v_1)$ that is compatible with $c(u,v_1)$. The compatibility is guaranteed by the constraint that does not allow to choose a subset $C$ (the problem's variable) that contains $c(u,v_1)$ if it is a true color.

If $i\geq 2$, we calculate $h(u,V_i,A)$ by using the values $h(u,V_{i-1},B)$. We solve the following maximisation problem for every non empty subset $B\subseteq W(u)$ and any color $c(u,v_i)\in W(u)$ for the edge $(u,v_i)$, such that $A=B\cup \{c(u,v_i)\}$, and $B\cap \{c(u,v_i)\}\subseteq \{\mathit{0}\}$.

\begin{equation} 
\begin{aligned}  \label{eq:tree_S_g}
& {\text{max}}
& &  h(u,V_{i-1},B) + f(v_i,C) + p(c(u,v_i)) \\
& \text{s.t.} & &C\cap  \{c(u,v_i)\} \subseteq \{\mathit{0}\}\\  
& & &C\subseteq W(v_i)
\end{aligned}
\end{equation}

The idea is that, for every $B\subseteq W(u)$, and for every $c(u,v_i)$ that are compatible, we search the subset $C\subseteq W(v_i)$ (the problem's variable), compatible with $c(u,v_i)$, which maximise the number of edges with true colors in $G(V_i)$, that is the objective function. In the objective function, $f(v_i,C) + p(c(u,v_i))$ refers to the subgraph $G(v_i)\cup (u,v_i)$, while $h(u,V_{i-1},B)$ is the value already computed. 

The time complexity to compute $h(u,V_i,A)$ for all the $t$ sons of an internal vertex $u$ is $O(3\cdot 2^6 t)=O(t)$. In fact, for $t=1$ we solve Problem~\ref{eq:tree_S_g1} for every $c(u,v_i)\in W(u)$, so at most $3$ times; at each of the $t$ steps, with $t\geq 2$, we solve Problem~\ref{eq:tree_S_g} at most for every subset $B\subseteq W(u)$, and for every color $c(u,v_i)\in W(u)$. Since there are less that $2^3$ non empty subsets $B$, at most $3$ possible colors for $c(u,v_i)$, and at most $2^3$ subsets $C$ (the problems' variable), the time complexity for computing $h(u,V_i,A)$ for every $u$'s sons, is $O(3\cdot 2^6 t)$. Then, this is the time complexity to compute $f(u,A)$ for an internal vertex $u$ with $t$ sons. 

In conclusion, starting from the leaves, we can compute $f(u,A)$ for every internal node $u$, from the lowest level of the tree until we reach $r$. Since we need $O(3\cdot 2^6 t_u)=O(t_u)$ time for each internal node $u$ with $t_u$ sons, the total running-time will be $\sum_{u\in V}O(t_u)$ which is $O(|V|)$, as the number of edges in a tree is $|V|-1$. The proof is complete.
\end{proof}

We are ready to obtain the next result.

\begin{theorem}
\label{thm:FPTdimCycSpace} The maximum 2-edge-colorable subgraph problem is FPT with respect to the dimension of the cycle space.
\end{theorem}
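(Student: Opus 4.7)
The first step is to exploit the classical identity $\dim(C(G)) = |E| - |V| + c$ (with $c$ the number of connected components of $G$), which equals the size of a minimum feedback edge set $F$; such an $F$ is computable in polynomial time (take the edge-complement of any spanning forest of $G$). Setting $k := \dim(C(G))$ and $G' := G - F$, the graph $G'$ is a forest, while the branching space on $F$ has size only $3^k$, which is exactly where the $3^k$ in Table~\ref{tab:1} should come from.

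The plan is to enumerate every assignment $\phi : F \to \{0,1,2\}$, where $0$ stands for ``not colored'' and $1,2$ are the two true colors. An assignment $\phi$ is \emph{locally consistent} if no vertex is incident to two $F$-edges that receive the same true color under $\phi$; inconsistent $\phi$ are discarded. For each consistent $\phi$, every vertex $v$ of $G$ inherits a set $W(v) \subseteq \{0,1,2\}$ of colors still permitted on $G'$-edges at $v$: start with $W(v) = \{0,1,2\}$ and remove color $j \in \{1,2\}$ whenever some $F$-edge at $v$ is colored $j$ by $\phi$. Any 2-edge-coloring of a subgraph of $G$ that extends $\phi$ then corresponds exactly to a 2-edge-coloring of a subgraph of $G'$ respecting the constraints $W$.

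Thus, for each consistent $\phi$, the task reduces to finding a maximum 2-edge-colorable subgraph of the forest $G'$ subject to per-vertex allowed-color sets $W$. When $W$ has the special form of a partition into ``only color $2$'' and ``only color $1$'' vertices, this is exactly Lemma~\ref{lem:ForestConstraints}; more generally, the dynamic-programming algorithm exhibited in the proof of that lemma already operates with an arbitrary map $W : V \to 2^{\{0,1,2\}}$, which transparently covers the remaining cases ($W(v) = \{0,1,2\}$ for vertices untouched by $F$, and $W(v) = \{0\}$ for vertices incident to $F$-edges of both true colors). Each forest subproblem is therefore solved in $O(|V|)$ time, and we add to its value the number of $F$-edges carrying a true color under $\phi$. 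Returning the best branch yields the overall answer in total time $3^k \cdot \mathrm{poly}(\mathrm{size})$, matching Table~\ref{tab:1}. The only point requiring care is the observation that the algorithm inside Lemma~\ref{lem:ForestConstraints} really does handle the general per-vertex constraint $W$ induced by $\phi$; I expect this to be the main (mild) obstacle, but since the dynamic program already indexes states by arbitrary subsets $A \subseteq W(u)$, no additional work is needed beyond setting $W$ correctly from $\phi$.
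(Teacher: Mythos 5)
Your proposal is correct and follows essentially the same route as the paper: compute a minimum feedback edge set $F$ of size $k=\dim(C(G))$, enumerate the $3^k$ color assignments to $F$, discard the locally inconsistent ones, and solve the residual constrained problem on the forest $G-F$ via the dynamic program of Lemma~\ref{lem:ForestConstraints}. The only cosmetic difference is that the paper massages the constraints into the partition form $S=S_1\cup S_2$ of the lemma's statement (deleting a vertex whose two incident $F$-edges carry both true colors), whereas you invoke the general per-vertex map $W:V\to 2^{\{0,1,2\}}$ that the lemma's proof (the problem $\mathcal{P}$) already handles — both are valid.
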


\begin{proof} Let $k=dim(C(G))$. In polynomial time, we can find $k$ edges whose removal leave a forest. Let $F$ be this set of $k$ edges, and let $T=G-F$. Any maximum 2-edge-colorable subgraph of $G$ colors some subset of edges of $F$. Thus, we can guess this subset. The number of choices is $3^k$ (each edge of $F$ is either of color 1 or 2, or 0 meaning that it is uncolored). Now, consider any of these guesses. If it contains at least three edges adjacent to the same vertex, or two edges of the same color incident to the same vertex, then we do not consider it. If it contains two edges $e$ and $f$ incident to the same vertex $z$ such that edges have different color, we remove $z$ and forbid the corresponding color on the other end-point of $e$ and $f$ in $T$. If an edge is not adjacent to any other edge in the guess, we simply remove it and forbid its color in its end-points on $T$. Having done this, we get an instance of the forest problem with constraints on vertices. By Lemma \ref{lem:ForestConstraints}, we can find a largest 2-edge-colorable subgraph respecting the constraints in polynomial time. Thus, we can compare the sizes of all these 2-edge-colorable subgraphs and get a maximum 2-edge-colorable subgraph of $G$ in polynomial time. Thus, the total running-time of our algorithm is $3^{k}\cdot poly(size)$. The proof is complete.
\end{proof}

\begin{remark}
\label{rem:BipNumberTreewidth} Theorem \ref{thm:FPTdimCycSpace} can be deduced as a consequence of Theorem \ref{thm:TreeWidth} and Lemma \ref{lem:Redparam}, since by Lemma 1 of \cite{SongYu:2015}, the size of the smallest feedback edge-set is an upper bound for the size of the smallest feedback set, which in its turn is an upper bound for the treewidth. 
\end{remark} Despite this ``negative" remark, we believe that our proof is interesting, as it relies on Lemma \ref{lem:ForestConstraints} which can be useful in other situations. Also note that our proof allows us to obtain an explicit expression for the running-time of the algorithm. As it is stated in the end of Section 7.4.2 of \cite{FPTbook}, obtaining the exact expression for the running-time of algorithms arising from Courcelle's theorem could be a non-trivial task. The reader is invited to take a look at the end of Section 7.4.2 of \cite{FPTbook} for further details on this.

The strategy of the proof of Theorem \ref{thm:FPTdimCycSpace} implies the following corollary:
\begin{corollary}
\label{cor:AtmostLogEdges} Let $G=(V,E)$ be a connected graph with $|E|\leq |V|+\log |V|$ edges. Then the maximum 2-edge-colorable subgraph problem can be solved in polynomial time for this type of graphs.
\end{corollary}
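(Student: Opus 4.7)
The plan is to obtain the corollary as a direct consequence of the proof of Theorem \ref{thm:FPTdimCycSpace} by bounding the dimension of the cycle space. Recall that for a connected graph $G$ we have
\[
\dim(C(G)) = |E| - |V| + 1.
\]
Under the hypothesis $|E| \leq |V| + \log |V|$, this gives
\[
\dim(C(G)) \leq \log |V| + 1.
\]

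Applying the algorithm from the proof of Theorem \ref{thm:FPTdimCycSpace} to $G$, its running-time is $3^{\dim(C(G))} \cdot poly(size)$. Substituting the bound above, the running time is at most
\[
3^{\log |V| + 1} \cdot poly(size) \;=\; 3 \cdot |V|^{\log_2 3} \cdot poly(size),
\]
which is polynomial in the size of the input since $\log_2 3 < 2$ is a constant. Hence the maximum 2-edge-colorable subgraph problem is solvable in polynomial time on this class of graphs.

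The only step that deserves care is the translation from $k = \dim(C(G))$ into a polynomial bound: the factor $3^k$ from the FPT algorithm becomes polynomial precisely because $k$ grows only logarithmically in $|V|$, which follows from the edge-count hypothesis combined with the classical formula $\dim(C(G)) = |E|-|V|+k$ specialized to connected graphs. The feedback edge-set of size at most $\log |V| + 1$ required by the algorithm can still be computed in polynomial time as noted in the proof of Theorem \ref{thm:FPTdimCycSpace}, so no additional obstacle arises. Thus the corollary follows without further work.
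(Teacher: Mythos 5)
Your proof is correct and follows essentially the same route as the paper: the paper's own proof explicitly reruns the argument of Theorem \ref{thm:FPTdimCycSpace} with a spanning tree, observing that the at most $\log|V|$ (more precisely $|E|-|V|+1\leq\log|V|+1$) non-tree edges admit only polynomially many color guesses, which is exactly your observation that $3^{\dim(C(G))}=O\bigl(|V|^{\log_2 3}\bigr)$ is polynomial. The only cosmetic difference is that you invoke the theorem's stated running time as a black box while the paper re-describes the algorithm, but the underlying argument is identical.
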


\begin{proof} The proof is the same. Start with any spanning tree $T$ of $G$. Observe that the number of edges of $G$ outside $T$ is at most $\log |V|$. Guess all possible assignments of 2-colors to these edges. Since their number is at most $\log |V|$, we have that the total number of guesses in polynomial in $|V|$. For each of the guesses, via Lemma \ref{lem:ForestConstraints}, we find a largest 2-edge-colorable subgraph respecting the constraints arising from the guesses in polynomial time. Thus, we can compare the sizes of all these 2-edge-colorable subgraphs and get a maximum 2-edge-colorable subgraph of $G$ in polynomial time. The proof is complete.
\end{proof}

Using this corollary, one can show that our problem is FPT with respect to $MaxLeaf(G)$. Recall that for a connected graph $G$, $MaxLeaf(G)$ is defined as the maximum number of leaves in a spanning tree of $G$. In order to derive this result, we will use the following

\begin{theorem}(\cite{Ding})
\label{thm:DingMaxLeafThm}
Let $G$ be a simple connected graph with $|E|\geq |V|+ \frac{t(t-1)}{2}$ edges and $|V|\neq t+2$. Then $MaxLeaf(G)>t$ and the bound is best possible.
\end{theorem}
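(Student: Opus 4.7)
The plan is to argue the contrapositive: assuming $MaxLeaf(G)\leq t$, I want to conclude that $|E|\leq |V|-1+\binom{t}{2}$, which is strictly less than $|V|+\binom{t}{2}$, except in the excluded case $|V|=t+2$. Equivalently, if I pick a spanning tree $T$ of $G$ realising $\ell:=MaxLeaf(G)\leq t$, the goal becomes bounding the chord count $|E|-|V|+1$ by $\binom{\ell}{2}\leq \binom{t}{2}$.

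The first step would be to study the \emph{skeleton} $T'$ obtained from $T$ by suppressing all internal vertices of degree $2$. This $T'$ has the same $\ell$ leaves, at most $\ell-2$ branch vertices (all of degree at least three in $T$), and hence at most $2\ell-3$ edges; every edge of $T'$ corresponds to a maximal path of $T$ passing through degree-$2$ vertices, which I will call a \emph{segment}. I would then define a map from the chord set $E\setminus E(T)$ to $\binom{L}{2}$, where $L$ is the leaf set of $T$: for a chord $e=uv$, pick the unique pair of leaves $\{a,b\}$ of $T'$ such that the segments containing $u$ and $v$ both lie on the $a$–$b$ path in $T'$. The desired bound $\binom{\ell}{2}$ on the number of chords would then follow from this map being injective.

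The main obstacle is proving injectivity. If two distinct chords $e_1,e_2$ both map to the same pair $\{a,b\}$, I would exhibit a \emph{leaf-increasing exchange}: use one of the chords to replace a carefully selected tree edge lying on the $a$–$b$ corridor so that some interior vertex of the corridor drops from degree $2$ to degree $1$, producing a spanning tree with $\ell+1$ leaves and contradicting the maximality of $T$. Verifying that the swap preserves the tree property (the removed edge must lie on the fundamental cycle of the chord) and strictly gains a leaf (no previously existing leaf is destroyed) is the technically delicate part, with several sub-cases depending on whether an endpoint of $e_i$ is itself a leaf or a branch vertex. Finally, the isolated case $|V|=t+2$ would be handled by a direct check: in that narrow regime a small family of extremal configurations attains $|E|=|V|+\binom{t}{2}$ together with $MaxLeaf(G)=t$, and removing this single cardinality prevents the chord-counting argument from merely meeting rather than beating the threshold, yielding the strict inequality claimed in the theorem.
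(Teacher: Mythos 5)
First, a point of reference: the paper does not prove Theorem~\ref{thm:DingMaxLeafThm} at all --- it is imported verbatim from \cite{Ding} (Ding, Johnson and Seymour), so there is no internal proof to compare yours against; you are attempting to reprove a nontrivial external result. Your reduction to the contrapositive and the target inequality (at most $\binom{\ell}{2}$ chords over a maximum-leaf spanning tree with $\ell$ leaves) are the right shape, but the central lemma you rely on is false as stated, and the counterexample is exactly the excluded case. Take $G=K_{2,2,2}$, i.e.\ $K_6$ minus a perfect matching: it has $12$ edges and no universal vertex, so $MaxLeaf(G)=4$; a maximum-leaf spanning tree is a double star with $5$ edges, leaving $7$ chords, while $\binom{4}{2}=6$. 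Hence no injection from chords to leaf pairs can exist, and no leaf-increasing exchange is available when two chords collide, because $MaxLeaf$ really is $4$. This shows the hypothesis $|V|\neq t+2$ must enter the exchange argument in an essential way, yet your sketch never indicates where it would be used; an argument of the kind you describe, if it worked as written, would also ``prove'' the false unrestricted statement. Your closing remark that excluding $|V|=t+2$ merely prevents the chord count from ``meeting rather than beating'' the threshold misreads the situation: in the exceptional case the chord bound $\binom{\ell}{2}$ is violated outright, the hypothesis of the theorem holds, and the conclusion fails.

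Even setting that aside, the two load-bearing steps are not actually carried out. The map is not well-defined: a chord endpoint may be a branch vertex of the skeleton $T'$ and so lies on no segment, and when both endpoints do lie on segments there is in general no \emph{unique} leaf pair $\{a,b\}$ whose $T'$-path contains both segments (two endpoints on one internal segment admit many such pairs), so ``the unique pair'' must be replaced by a canonical choice whose properties you would then have to track through the exchange. And the injectivity step --- producing a spanning tree with $\ell+1$ leaves whenever two chords collide --- is precisely the entire content of the theorem; deferring it to ``several sub-cases'' in the ``technically delicate part'' leaves the proof without its substance. As it stands this is a plausible plan rather than a proof, and the plan has a structural flaw at exactly the point where the exceptional cardinality $|V|=t+2$ has to be isolated.
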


We are ready to prove:
\begin{proposition}
\label{prop:MaxLeafFPT} The maximum 2-edge-colorable subgraph problem is FPT with respect to $MaxLeaf(G)$.
\end{proposition}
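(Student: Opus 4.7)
The plan is to set $t = MaxLeaf(G)$ and reduce everything either to a direct brute-force on a small graph or to the dimension-of-cycle-space parameterization already handled by Theorem \ref{thm:FPTdimCycSpace}. First I would note that the statement concerns connected graphs (in the disconnected case one computes $MaxLeaf$ and solves the problem independently on each connected component and sums, since 2-edge-colorability is componentwise and the number of components is at most $|V|$).

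Next I would invoke the contrapositive of Theorem \ref{thm:DingMaxLeafThm}: if $MaxLeaf(G)\leq t$, then either $|V|=t+2$, or $|E|<|V|+\tfrac{t(t-1)}{2}$. This splits the analysis into two cases:

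\textbf{Case 1: $|V|=t+2$.} Here the whole graph has at most $t+2$ vertices, so $|E|\leq \binom{t+2}{2}$. The brute-force algorithm enumerating all subsets of $E$ and testing 2-edge-colorability (as described at the start of Section \ref{sec:main}) runs in time $2^{|E|}\cdot poly(size)\leq 2^{\binom{t+2}{2}}\cdot poly(size)$, which depends only on $t=MaxLeaf(G)$ up to a polynomial factor.

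\textbf{Case 2: $|E|<|V|+\tfrac{t(t-1)}{2}$.} Since $G$ is connected, $dim(C(G))=|E|-|V|+1<\tfrac{t(t-1)}{2}+1$. Theorem \ref{thm:FPTdimCycSpace} then yields a $3^{dim(C(G))}\cdot poly(size)$-time algorithm, which is bounded by $3^{t(t-1)/2+1}\cdot poly(size)$, again FPT in $t=MaxLeaf(G)$.

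Combining both cases gives a running time of the form $2^{O((MaxLeaf(G)+1)^2)}\cdot poly(size)$, matching the bound reported in Table~\ref{tab:1}. The only genuine subtlety is the disjunction in Ding's theorem (one must not forget the exceptional value $|V|=t+2$), but that case is handled trivially by brute force, so I do not expect any real obstacle; the proof is essentially a clean combination of Theorem \ref{thm:DingMaxLeafThm} with either the trivial algorithm or Theorem \ref{thm:FPTdimCycSpace}.
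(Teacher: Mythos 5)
Your proof is correct, but it takes a genuinely different route from the paper's. The paper also reduces to the connected case and relies on Theorem \ref{thm:DingMaxLeafThm}, but it applies that theorem with the auxiliary value $t=\lfloor\sqrt{2\log|V|}\rfloor$: after disposing of the sparse case $|E|\leq|V|+\log|V|$ via Corollary \ref{cor:AtmostLogEdges}, it concludes $MaxLeaf(G)>\sqrt{2\log|V|}-1$, hence $|V|<2^{(MaxLeaf(G)+1)^2/2}$, and then runs the trivial $O^{*}(2^{|E|})$ algorithm on a graph whose vertex count is bounded by the parameter. You instead apply Ding's theorem directly with $t=MaxLeaf(G)$ and read off its contrapositive, which hands you either a graph on $t+2$ vertices (brute force) or a graph whose cycle space has dimension at most $t(t-1)/2$, to which Theorem \ref{thm:FPTdimCycSpace} applies; your explicit treatment of the exceptional disjunct $|V|=t+2$ is exactly the point the paper waves away with ``true for sufficiently large $|V|$''. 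What your route buys is a better explicit bound: the paper's argument only bounds $|V|$ (not $|E|$) in terms of the parameter before invoking the $2^{|E|}$ brute force, so as written it yields a doubly exponential dependence $2^{|V|^2}\leq 2^{2^{(MaxLeaf(G)+1)^2}}$, whereas both of your cases run in $2^{O(MaxLeaf(G)^2)}\cdot poly(size)$, which is closer to what Table~\ref{tab:1} actually advertises. The only detail worth spelling out is that in Case 2 connectivity and integrality give $\dim(C(G))=|E|-|V|+1\leq t(t-1)/2$, but since you have already restricted to connected graphs this is cosmetic.
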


\begin{proof} Clearly, we can assume that the input graph $G$ is connected. If $|E|\leq |V|+\log |V|$, then Corollary \ref{cor:AtmostLogEdges} implies that we can find a maximum 2-edge-colorable subgraph in polynomial time. Thus, without loss of generality, we can assume that $|E|> |V|+\log |V|$. Observe that if we choose $t=\lfloor\sqrt{2\log|V|}\rfloor$, then 
\[\log |V|\geq \frac{t^2}{2},\]
hence
\[|E|>|V|+\log |V|\geq |V|+\frac{t^2}{2}>|V|+\frac{t(t-1)}{2}.\]
Since $|V|\neq t+2=\lfloor \sqrt{2\log|V|}\rfloor +2$ (this is true for sufficiently large $|V|$, we can solve small instances with the brute force algorithm directly), Theorem \ref{thm:DingMaxLeafThm} implies that 
\[ MaxLeaf(G)>t=\lfloor\sqrt{2\log|V|}\rfloor>\sqrt{2\log|V|}-1,\]
or
\[|V|< 2^{\frac{(MaxLeaf(G)+1)^2}{2}}.\]
In other words, $|V|$ is bounded in terms of $MaxLeaf(G)$. Thus the trivial algorithm will solve the problem in FPT time with respect to $MaxLeaf(G)$. The proof is complete.
\end{proof}

One may wonder, whether our problem is FPT with respect to the complementary parameter $|V|-MaxLeaf(G)$? Observe that $|V|-MaxLeaf(G)=1$ if and only if $MaxLeaf(G)=|V|-1$. The latter condition is equivalent to the statement that the graph under consideration contains a spanning star. However, the latter condition is the same as having $rad(G)=1$. Thus, combined with Theorem \ref{thm:radius}, we get:
\begin{proposition}
\label{prop:|V|minusMaxLeaf} The maximum 2-edge-colorable subgraph problem is paraNP-hard with respect to $|V|-MaxLeaf(G)$.
\end{proposition}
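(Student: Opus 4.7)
The plan is to piggy-back on the reduction already carried out in the proof of Theorem \ref{thm:radius}, observing that the hard instances constructed there automatically satisfy $|V|-MaxLeaf(G)=1$.

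First, I would make the equivalence stated just before the proposition precise: $|V|-MaxLeaf(G)=1$ holds if and only if $MaxLeaf(G)=|V|-1$, which in turn is equivalent to saying that $G$ admits a spanning star, i.e.\ there is a vertex adjacent to every other vertex of $G$. Next, I would recall the construction from the proof of Theorem \ref{thm:radius}: starting from a cubic graph $G$ (an instance of Problem~3 with $l\geq 6$), one adds a new vertex $z$ joined to every vertex of $G$ to obtain $G'$. The vertex $z$ is adjacent to every other vertex of $G'$, hence the star centered at $z$ is a spanning star of $G'$; consequently $MaxLeaf(G')=|V(G')|-1$ and $|V(G')|-MaxLeaf(G')=1$.

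Thus, exactly the same reduction shows that deciding $\nu_2(G')\geq |V|$ is NP-hard even when restricted to inputs with $|V|-MaxLeaf(G)=1$. Since the parameter $|V|-MaxLeaf(G)$ is a fixed constant ($=1$) on the produced instances, the maximum 2-edge-colorable subgraph problem is paraNP-hard with respect to $|V|-MaxLeaf(G)$, completing the proof.

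I do not expect any real obstacle here; the whole argument is a re-reading of the reduction of Theorem \ref{thm:radius} under the new parameter, combined with the elementary observation that a graph having a dominating single vertex has $MaxLeaf=|V|-1$.
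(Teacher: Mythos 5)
Your proposal is correct and matches the paper's own argument: the paper likewise derives the proposition by noting that the graphs $G'$ from the proof of Theorem \ref{thm:radius} contain a spanning star centered at $z$ (equivalently, $rad(G')=1$), hence $MaxLeaf(G')=|V(G')|-1$ and the parameter equals $1$ on all hard instances. No gaps.
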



\section{Dynamic programming algorithms for branchwidth, treewidth, and cliquewidth}\label{sec:dynamic}

This section is dedicated to two dynamic programming algorithms for three well known parameters, that is branchwidth, treewidth, and cliquewidth.

\subsection{An FPT dynamic programming algorithm for graphs with bounded branchwidth and treewidth }

We describe here an FPT algorithm for the maximum 2-edge-colorable subgraph problem on graphs with bounded \textit{branchwidth} and \textit{treewidth}. Even if Theorem~\ref{thm:TreeWidth} already states that the maximum 2-edge-colorable subgraph is FPT when the graph has bounded treewidth, it is worth to provide an FPT algorithm with a good execution time bound. In fact, Theorem~\ref{thm:TreeWidth} uses the Courcelle's result, which in general has a very high bound of the function $f(\theta)$.  We first recall some basic definitions that we use in Theorem~\ref{th:BranchWidth}. 

\begin{definition}[\cite{FPTbook}]\label{def:branch_decomposition2}
A branch decomposition of a graph $G=(V,E)$ is a pair $(T, \phi)$, where $T$ is a tree with $|E|$ leaves whose all internal nodes have degree three, and  $\phi$ is a  bijection from $E$ to the leaves of $T$. Each edge $(i,j)$ of $T$ divides the tree into two components, so it divides $E$ into two parts $X$ and $E\setminus X$. We denote by $\delta(X)$ (\textit{border}) the set of those vertices of $G$ that are both incident to an edge of $X$ and to an edge of $E\setminus X$. The \textit{width} of a decomposition $(T,\phi)$ is the maximum value on the edges of $T$. The \emph{branchwidth} of $G$, denoted by $bw(G)$, is the minimum width over all branch decompositions of $G$. We put $bw(G)=0$ if $|E|=1$, and  $bw(G)=0$ if $|E|=0$.
\end{definition}

Let $G(V,E)$ be a graph with a branch decomposition $(T,\phi)$ of width $h$, where  \emph{root} is the root of $T$. Now, we describe a dynamic programming algorithm that solves \textit{maximum 2-edge-colorable subgraph} for $G$, which exploits the structure and the properties of branch decomposition $(T,\phi)$. We will call nodes the vertices of $T$, in order to avoid confusion between the vertices of the graph and the ones of $T$.

Similarly to Lemma~\ref{lem:ForestConstraints}, we assume that $\{\mathit{0},\mathit{1},\mathit{2}\}$ is the set of colors where $\mathit{1}$, and $\mathit{2}$ are true colors, while $\mathit{0}$ is dummy and means 'not colored'. Let $p\colon \{\mathit{0},\mathit{1},\mathit{2}\}\to\{0,1\}$ be a function, which is equal to 1 if and only if the input is a true color, 0 otherwise (i.e., $p(\mathit{0})=0$, $p(\mathit{1})=1$, and $p(\mathit{2})=1$). If $c\colon E\to \{\mathit{0},\mathit{1},\mathit{2}\}$ is a function, then $c((u,v))$ is called the color assigned to the edge $(u,v)$. In order to avoid cluttered notation, in the following we will write $c(u,v)$ instead of $c((u,v))$. 

Denote by $T(i)$ the subtree induced by the node $i$ of $T$ and all its descendants. Denote also by $E_i$ the leaves of $T(i)$, by $G(i)$ the subgraph of $G$ induced by the edges in $E_i$, and by $\delta(E_i)$ (\textit{border}) the set of those nodes of $G$ that are both incident to an edge of $E_i$ and to an edge of $E\setminus E_i$. In the core of the dynamic programming algorithm, we compute the optimal value of a constrained version of the problem restricted to the subgraph $G(i)$. The constraint is given by the colors incident on each node in the border $\delta(E_i)$. In particular, we compute $f(i,\mathcal{A})$, which is the optimum value of \textit{maximum 2-edge-colorable subgraph} on $G(i)$, where $\mathcal{A}$ is a collection of subsets of colors defined as

\begin{itemize}
\item The colors \textit{incident} to the vertex $u$ are those in $A(u)$, for every $u\in X_i$.
\end{itemize}

Here, as usual, a color is \textit{incident} to a vertex $u$ if it is used at least in one edge incident to $v$. Please note that only the dummy color $\mathit{0}$ can be incident more than once to a vertex, because it means that an edge is not colored. Please also note that the branchwidth of $G$ is $h$, so the maximum number of subsets in $\mathcal{A}$ is exactly $h$. 

If $i$ is a leaf corresponding to an edge $(u,v)\in E$, then there are three different cases. In the first one both $u$ and $v$ are not connected to some other nodes in $V\setminus \{u,v\}$, then $\mathcal{A}=\emptyset$, because the border is empty, and $f(i,\mathcal{A})=1$. In fact, the edge $\{u,v\}$ is isolated, so, we color it with color 1 or 2 indifferently.

In the second case only one of the two nodes $u$ and $v$ is connected with the rest of the graph. We can suppose w.l.o.g. that $u$ is the connected node.  Then, $\mathcal{A}=\{A(u)\}$ and

\begin{itemize}
\item $f(i,\mathcal{A})=0$ \quad if $A(u)=\{\mathit{0}\}$;
\item $f(i,\mathcal{A})=1$ \quad if $A(u)=\{\mathit{1}\}$;
\item $f(i,\mathcal{A})=1$ \quad if $A(u)=\{\mathit{2}\}$;
\item $f(i,\mathcal{A})=-\inf$ \quad otherwise.
\end{itemize}

In fact, the number of edge colored is 1 iff there is a real color incident to $u$. Analogously, in the third case both the nodes $u$ and $v$ are connected to the rest of the graph, so, $\mathcal{A}=\{A(u),A(v)\}$ and

\begin{itemize}
\item $f(i,\mathcal{A})=0$ \quad if $A(u)=A(v)=\{\mathit{0}\}$;
\item $f(i,\mathcal{A})=1$ \quad if $A(u)=A(v)=\{\mathit{1}\}$ or $A(u)=A(v)=\{\mathit{2}\}$;
\item $f(i,\mathcal{A})=-\inf$ \quad otherwise.
\end{itemize}

In fact, to get 1 as profit, the edge $\{u,v\}$ is colored, and there is only one color incident both in $u$ and $v$.

We analise now the case of an internal node $i$ of $T$ with children $j$ and $k$. We can compute $f(i,\mathcal{A})$ by using the already computed values $f(j,\mathcal{B})$ and $f(k,\mathcal{C})$. Please note that $G(j)$ and $G(k)$ can share some common vertices, that is $V_j\cap V_k\neq \emptyset$. If it is the case, the solutions on $G(j)$ and $G(k)$ must have the same colors incident on these nodes, because we want to merge them. The following constrained maximisation problem solves $f(i,\mathcal{A})$ for a particular collection $\mathcal{A}$.

\begin{equation} \small
\begin{aligned} \label{eq:Branch2}
& {\text{max}}
& & f(j,\mathcal{B})+f(k,\mathcal{C})\\
& \text{s.t.} & & A(u)=B(u) \quad \forall u\in ((\delta(E_j)\setminus \delta(E_k))\cap \delta(E_i))\\
& & & A(u)=C(u) \quad \forall u\in ((\delta(E_k)\setminus \delta(E_j))\cap \delta(E_i))\\
& & & A(u)=B(u)\cup C(u) \quad \forall u\in ((\delta(E_j)\cap \delta(E_k))\cap \delta(E_i))\\
& & & B(u)\cap C(u)\subseteq \{\mathit{0}\} \quad \forall u\in (\delta(E_j)\cap \delta(E_k))
\end{aligned}
\end{equation}

In fact, the border of $G(i)$ is a subset of the union of the borders of $G(j)$ and $G(k)$, that is $\delta(E_i)\subseteq \delta(E_j)\cup \delta(E_k)$. Then, in order to merge the solutions founded for $G(j)$ and $G(k)$, we must guarantee that: for each node $u$ in $\delta(E_i)$ that is also in $\delta(E_j)$ but not in $\delta(E_k)$, the set of colors $A(u)$ equals $B(u)$ (first constraint); for each node $u$ in $\delta(E_i)$ that is also in $\delta(E_k)$ but not in $\delta(E_j)$, the set of colors $A(u)$ equals $C(u)$ (second constraint); for each node $u$ in $\delta(E_i)$ that is also in $\delta(E_k)$ and in $\delta(E_j)$, the set of colors $A(u)$ equals $C(u)\cup B(u)$ (third constraint); the set of colors incident in each node belonging both in $\delta(E_j)$  and $\delta(E_k)$ must be compatible (fourth constraint).

We analise now the time complexity of the algorithm. 

If $i$ is a leaf of $T$, since the graph $G(i)$ is only made by an edge $\{u,v\}$, there are at most only three ways for the collections $\mathcal{A}=\{A(u),A(v)\}$, which are $\{\{\mathit{0}\},\{\mathit{0}\}\}$, $\{\{\mathit{1}\},\{\mathit{1}\}\}$, and $\{\{\mathit{2}\},\{\mathit{2}\}\}$.

If $i$ is an internal node, the time complexity is given by all the possible combination for the two collections $\mathcal{B}$ and $\mathcal{C}$, which have cardinality at most $h$. The number of possible sets $A(u)$, for a given vertex $u$, are at most 6, that is $\{\mathit{0}\}$, $\{\mathit{1}\}$, $\{\mathit{0},\mathit{1}\}$, $\{\mathit{0},\mathit{2}\}$, $\{\mathit{1},\mathit{2}\}$, $\{\mathit{0},\mathit{1},\mathit{2}\}$. This means that the time complexity at each internal node is $O(6^{2h})$

In conclusion, since each internal node has degree three (two sons), and in a perfect binary tree there are $2|V|-1$ nodes, then $T$ has at most $2|V|-1$ nodes. This means that the time complexity of the dynamic programming algorithm is $O((2|V|-1)6^{2h})$.

We can then state the following:

\begin{theorem}\label{th:BranchWidth}
Given an instance of \textit{maximum 2-edge-colorable subgraph problem} with a graph $G$, and a branch decomposition $T$ of width $h$ for $G$, the \textit{maximum 2-edge-colorable subgraph problem} is solvable in $O(6^{2h}(2|V|-1))$ time.
\end{theorem}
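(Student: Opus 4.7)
The plan is to design a dynamic programming algorithm that traverses the branch decomposition $T$ from the leaves up to the root, maintaining at each node $i$ just enough information about the edges already processed (those in $E_i$) so that it can be consistently combined with the remaining edges through the border $\delta(E_i)$. Since the width of $T$ is $h$, the border at every node has at most $h$ vertices, so the state space at each node should be small enough to give the claimed bound.

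First, I would fix the color set $\{\mathit{0},\mathit{1},\mathit{2}\}$ (with $\mathit{0}$ meaning ``uncolored'') and, for every node $i$ of $T$, define the table entry $f(i,\mathcal{A})$ as the maximum number of truly colored edges in $G(i)$ subject to the constraint that at each border vertex $u\in \delta(E_i)$, the set of colors incident to $u$ via edges of $E_i$ is exactly $A(u)$. Here $\mathcal{A}=\{A(u):u\in \delta(E_i)\}$ and each $A(u)$ is one of at most six subsets of $\{\mathit{0},\mathit{1},\mathit{2}\}$ (namely $\{\mathit{0}\},\{\mathit{1}\},\{\mathit{0},\mathit{1}\},\{\mathit{0},\mathit{2}\},\{\mathit{1},\mathit{2}\},\{\mathit{0},\mathit{1},\mathit{2}\}$), since a true color can appear at most once at a vertex while $\mathit{0}$ may appear repeatedly. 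The optimum value of the problem is then obtained at the root by taking the maximum over all collections $\mathcal{A}$ of $f(\mathit{root},\mathcal{A})$.

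The second step is the base case: if $i$ is a leaf, it corresponds to a single edge $(u,v)$, so $G(i)$ has at most one colored edge and the table is filled directly from the small number of admissible $\mathcal{A}$'s (depending on whether $u$, $v$, both, or neither appear in the border at $i$). The main step is the recurrence at an internal node $i$ with children $j$ and $k$: I would compute $f(i,\mathcal{A})$ by maximizing $f(j,\mathcal{B})+f(k,\mathcal{C})$ over all pairs of child collections $(\mathcal{B},\mathcal{C})$ compatible with $\mathcal{A}$, where compatibility means (i) for a border vertex $u$ of $i$ that lies only in $\delta(E_j)$ we must have $A(u)=B(u)$ (and symmetrically for $k$); (ii) for $u$ in both child borders we require $A(u)=B(u)\cup C(u)$; and (iii) for any $u\in \delta(E_j)\cap \delta(E_k)$ (whether or not $u$ lies in $\delta(E_i)$) the sets $B(u)$ and $C(u)$ must share only $\mathit{0}$, so that no true color is doubly used at $u$. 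This is exactly the MSO condition ``subgraph is 2-edge-colorable'' translated to the interface information that the branch decomposition exposes.

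The most delicate step is verifying that this recurrence is correct, i.e.\ that two child solutions whose colored edge sets respect the compatibility constraints at $\delta(E_j)\cap \delta(E_k)$ can always be glued into a valid 2-edge-coloring of $G(i)$, and conversely that any optimal 2-edge-coloring of $G(i)$ projects to compatible children; this follows from the fact that $E_j$ and $E_k$ partition $E_i$ and share vertices only in $\delta(E_j)\cap \delta(E_k)$, so the constraint on shared borders captures exactly the adjacency relations that cross the partition. For the running-time analysis, the number of admissible collections $\mathcal{A}$ at any node is at most $6^h$, so at an internal node we enumerate at most $6^h\cdot 6^h = 6^{2h}$ pairs $(\mathcal{B},\mathcal{C})$; checking compatibility and updating $f(i,\cdot)$ takes $O(h)$ time per pair, which I absorb into the constant. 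Since a branch decomposition has $|E|$ leaves and all internal nodes of degree three, $T$ has at most $2|V|-1$ nodes, giving the overall bound $O(6^{2h}(2|V|-1))$ claimed in the theorem.
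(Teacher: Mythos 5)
Your proposal is correct and follows essentially the same route as the paper's proof: a bottom-up dynamic program over the branch decomposition with states $f(i,\mathcal{A})$ recording, for each border vertex, one of the six admissible incident-color sets, the same base cases at leaf edges, and the same merge rule at internal nodes (matching $A(u)$ to $B(u)$, $C(u)$, or $B(u)\cup C(u)$ and requiring $B(u)\cap C(u)\subseteq\{\mathit{0}\}$ on shared border vertices), yielding the identical $O(6^{2h}(2|V|-1))$ bound. No substantive differences to report.
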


It can be proved that branchwidth is in some sense equivalent to  treewidth, expressed in the formalism of branch decomposition. In fact, the following inequalities \cite{fomin:2015}, valid for every graph $G$ with branchwidth $h>1$, set the relations between these two parameters. 

\begin{equation} \label{bw_tw}
\text{bw}(G)\leq \text{tw}(G)+1\leq \frac{3}{2}\text{bw}(G)
\end{equation}

This allow us to write the following corollary.

\begin{theorem}\label{thm:TreeWidth2}
Given an instance of \textit{maximum 2-edge-colorable subgraph problem} with a graph $G$ with treewidth $h$, the \textit{maximum 2-edge-colorable subgraph problem} is solvable in $O(6^{\frac{4}{3}(h+1)}(2|V|-1))$ time.
\end{theorem}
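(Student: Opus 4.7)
The plan is to reduce Theorem~\ref{thm:TreeWidth2} to Theorem~\ref{th:BranchWidth}: given a graph $G$ equipped with a tree decomposition of width $h$, first convert it into a branch decomposition $(T,\phi)$ of $G$, and then invoke the dynamic programming algorithm of Theorem~\ref{th:BranchWidth} on $(T,\phi)$. Correctness is inherited entirely from Theorem~\ref{th:BranchWidth}, so only the runtime estimate has to be revisited.

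First, I would produce, in polynomial time, a branch decomposition $(T,\phi)$ of $G$ from the given tree decomposition. This is done via the standard constructive equivalence between tree and branch decompositions (see \cite{fomin:2015}), which also preserves an $O(|V|)$ bound on the number of nodes of $T$, so that the linear factor $(2|V|-1)$ appearing in the target runtime is not affected by the conversion.

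Next, I would bound the branchwidth $b$ of the resulting decomposition in terms of $h$ by appealing to inequality~(\ref{bw_tw}). Plugging the resulting bound on $b$ into the $O(6^{2b}(2|V|-1))$ runtime of Theorem~\ref{th:BranchWidth} and simplifying the exponent yields the claimed $O(6^{\frac{4}{3}(h+1)}(2|V|-1))$. No new dynamic programming machinery is introduced at this step: the branchwidth algorithm is applied verbatim, only the parameter $b$ in the exponent being re-expressed in terms of $h$.

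The main obstacle I anticipate is the constructive conversion step: one must make sure it is both polynomial time and yields a branch decomposition whose width admits the correct quantitative bound through~(\ref{bw_tw}), rather than the naive bound $b\leq h+1$ which would only give a $6^{2(h+1)}$ exponent in the running time. Once this conversion is in place, the rest of the argument is a direct call to the branchwidth algorithm of the previous subsection, and the stated running-time bound follows by arithmetic on the exponent.
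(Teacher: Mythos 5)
Your reduction follows the same route as the paper's own proof, but the obstacle you flag at the end is a genuine gap, and it cannot be overcome within this approach. The right-hand inequality of~(\ref{bw_tw}), $\text{tw}(G)+1\leq \frac{3}{2}\text{bw}(G)$, is a \emph{lower} bound on the branchwidth, namely $\text{bw}(G)\geq \frac{2}{3}(h+1)$, whereas to bound the running time $O(6^{2b}(2|V|-1))$ of Theorem~\ref{th:BranchWidth} from above you need an \emph{upper} bound on the width $b$ of the branch decomposition you feed it. The only upper bound available from~(\ref{bw_tw}) is $\text{bw}(G)\leq h+1$, and that is tight in general: the $n\times n$ grid has treewidth and branchwidth both equal to $n$, so no branch decomposition of width anywhere near $\frac{2}{3}(n+1)$ exists, and no conversion procedure can produce one. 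Consequently the honest running time obtainable by this route is $O(6^{2(h+1)}(2|V|-1))$ --- exactly the ``naive'' bound you mention --- and the exponent $\frac{4}{3}(h+1)$ in the statement is not justified by this argument.

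You should also know that the paper's own proof commits precisely the error you were trying to avoid: it derives ``the branchwidth $h'$ of $G$ is greater or equal to $\frac{2}{3}(h+1)$'' from~(\ref{bw_tw}) and then substitutes $h'=\frac{2}{3}(h+1)$ into $O(6^{2h'}(2|V|-1))$, i.e.\ it plugs a lower bound on the width into an increasing function of the width and reads the result as an upper bound on the time. Your instinct was therefore correct. The repair is either to weaken the claimed running time to $O(6^{2(h+1)}(2|V|-1))$, which still gives fixed-parameter tractability in $h$ with an explicit constant and suffices for every later use of this theorem (in particular Corollary~\ref{cor:FPTnu1} and the table), or to design a dynamic program directly over the tree decomposition instead of detouring through branchwidth.
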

\begin{proof}
By using the inequalities \ref{bw_tw}, it turns out that the branchwidth $h'$ of $G$ is greater or equal to $\frac{2}{3}(h+1)$. By using Theorem\ref{th:BranchWidth}, we can optimally solve the problem in $O(6^{2h'}(2|V|-1))$, that is in $O(6^{\frac{4}{3}(h+1)}(2|V|-1))$.

\end{proof}

\subsection{A polynomial time dynamic programming algorithm for graphs with bounded cliquewidth}

Here we show that our problem is polynomial time solvable in the class of graphs having bounded cliquewidth. We start by formally defining what the \textit{cliquewidth} \cite{FPTbook} of a graph is. Subsequently, we present our algorithm that optimally solves the \textit{maximum 2-edge-colorable subgraph problem} on graphs with bounded cliquewidth, which runs in polynomial time of the size of the instance.

We briefly recall some basic definitions before describing the algorithm. We are given an undirected graph $G=(V,E)$ with $n$ vertices and $m$ edges. The cliquewidth of $G$ is the minimum number $h$ of labels needed to build $G$ by using the following operations.

\begin{itemize}
\item i) \textit{Creation} of a new vertex $u$ with label $i$, denoted by $u_i$;
\item ii) \textit{Disjoint union} of two labelled graphs $G_1$ and $G_2$, denoted by $G_1 \oplus G_2$;
\item iii) \textit{Joining} by an edge every vertex labelled $i$ to every vertex labelled $j$;
\item iv) \textit{Renaming} all vertices with label $i$ to label $j$, denoted $\rho_{i\rightarrow j}(G)$.
\end{itemize}


We denote with $\varphi \colon V \rightarrow [h]$ a function that maps each vertex $u\in V$ to a specific label $i\in [h]$. We also denote with $\mathcal{X}$ a $k$-expression made of the four operations above, which is a clique decomposition of $G$. We check now that, given a decomposition $\mathcal{X}$ of $G$, we can compute an optimal solution of the \textit{maximum 2-edge-colorable subgraph problem} in polynomial time. In particular, we describe a dynamic programming algorithm, which exploits the fact that $G$ can be built by using the four operations above. 


We denote with $f(G,A)$ the optimum value  for the \textit{maximum 2-edge-colorable subgraph problem} on the labelled graph $G$, with some additional constraints. The set of vertices $V$ is divided into $h$ disjoint subsets $V_i\subseteq V$,  following the labels. In other words, all the nodes in each $V_i$ are labelled with label $i$. The set $A$ contains $8(h-1)+1$ non negative integers $a_{i}^0$, $a_{ij^-}^1$, $a_{ij}^1$, $a_{ij^-}^2$ $a_{ij}^2$, $a_{ij^-}^{12}$,  $a_{i1j}^{12}$, $a_{i2j}^{12}$, $a_{ij}^{12}$ for every label $i$ in $[k]$. Please note that $i$ and $j$ are two different labels in $[k]$. We denote with $A_i$ the subset of $A$ related to label $i$, that is $A_i=\{a_i^0, a_{ij^-}^1, a_{ij}^1, a_{ij^-}^2, a_{ij}^2, a_{ij^-}^{12}, a_{i1j}^{12}, a_{i2j}^{12}, a_{ij}^{12} \}$ for every $j\in ([k]\setminus i)$. Similarly, we denote with $A_{ij}$ the subset of $A$ related to two disjoint labels $i,j$, that is $A_{ij}=\{a_i^0, a_{ij^-}^1, a_{ij}^1, a_{ij^-}^2, a_{ij}^2, a_{ij^-}^{12}, a_{i1j}^{12}, a_{i2j}^{12}, a_{ij}^{12}\}$. Please note that, for a fixed $i\in [k]$, $a_i^0$ is contained in every $A_{ij}$ for every $j\in ([k]\setminus i)$.

The constraints related to the values $f(G,A)$ are the following ones.

\begin{itemize}
\item There are $a_{i}^0$ vertices in $V_i$ with zero  colors incident to them;
\item there are $a_{ij^-}^1$ vertices in $V_i$, each with only color 1 incident to it via an edge $(u,v)$ with $u\in V_i$ and $v\notin V_j$ ($v$ can belong to $V_i$);
\item there are $a_{ij}^1$ vertices in $V_i$, each with only color 1 incident to it via an edge $(u,v)$ with $u\in V_i$ and $v\in V_j$;
\item there are $a_{ij^-}^2$ vertices in $V_i$, each with only color 2 incident to it via an edge $(u,v)$ with $u\in V_i$ and $v\notin V_j$ ($v$ can belong to $V_i$);
\item there are $a_{ij}^2$ vertices in $V_i$, each with only color 2 incident to it via an edge $(u,v)$ with $u\in V_i$ and $v\in V_j$;
\item there are $a_{ij^-}^{12}$ vertices in $V_i$ with both colors 1 and 2 incident to them, but all the edges from this subset of $V_i$ to any vertex in $V_j$ are not colored (the two colored edges can completely belong to $V_i$); 
\item there are $a_{i1j}^{12}$ vertices in $V_i$ each with both colors 1 and 2 incident to them. In particular, for each $u$ of these vertices, the color 1 is on an edge $(u,v)$ with $v\in V_j$; while the color 2 is on an edge $(u,z)$ with $z\notin V_j$ ($z$ can belong to $V_i$);
\item there are $a_{i2j}^{12}$ vertices in $V_i$ each with both colors 1 and 2 incident to them. In particular, for each $u$ of these vertices, the color 1 is on an edge $(u,v)$ with $v\notin V_j$ (v can belong to $V_i$); while the color 2 is on an edge $(u,z)$ with $z\in V_j$;
\item there are $a_{ij}^{12}$ vertices in $V_i$, each with both colors 1 and 2 incident to it. For each vertex $u$ of these ones, there exist two edges $(u,v)$ and $(u,z)$ with both $v$ and $z$ in $V_j$. 	
\end{itemize} 

In other words, the set $A$ gives a more fine grained division of the vertices in $V$, based both on the labels and on the incident colors. Please note that we do not specify the vertices but just the numbers of them. For a given label $i\in [h]$, the above numbers are related to nine distinct subsets of $V_i$, which can also be empty.  Clearly, for each $i\in [h]$, it holds that  $\sum_{j\in[k]}(a_{i}^0+a_{ij^-}^1+a_{ij}^1+a_{ij^-}^2+a_{ij}^2+a_{ij^-}^{12}+a_{i1j}^{12}+a_{i2j}^{12}+a_{ij}^{12})=|V_i|$. 

It is also clear that, for a couple of disjoint labels $i,j\in[k]$, it holds that $a_{ij}^1+a_{i1j}^{12}+a_{ij}^{12}=a_{ji}^1+a_{j1i}^{12}+a_{ji}^{12}$. This is because both the left and the right parts of the equation equal the number of the edges from $V_i$ to $V_j$ colored with 1.  Analogously, for a couple of disjoint labels $i,j\in[h]$, it holds that $a_{ij}^2+a_{i1j}^{12}+a_{ij}^{12}=a_{ji}^2+a_{j2i}^{12}+a_{ji}^{12}$, which also equals the number of the edges from $V_i$ to $V_j$ colored with 2. Since $|A|= (8(h-1)+1)h=8h^2-7h$, and since each of this nonnegative integers is lover or equal to $|V|$, a simple upper bound to the number of ways of selecting the values in $A$ is $|V|^{(8h^2-7h-1)}$. This upper bound does not take into account how these values are related to each others.

We use $A$ in our dynamic programming algorithm, which runs on a given decomposition $\mathcal{X}$ of $G$. In particular, we use it in operation $iii)$. In the following we describe how to compute values $f(G,A)$ for each of the four operations of a given $\mathcal{X}$ decomposition of $G$.

\paragraph{Operation i)} For a given set $A$ on a graph $G$ obtained adding a new vertex $u_i$ to $G_1=(W,E_1)$, we want to compute $f(G,A)$ by using the already computed values $f(G_1,B)$. 

For a specific  set $A$, we have that 

\begin{itemize}
\item $f(G,A)=f(G_1,B)$ \quad if $a_i^0=b_i^0+1$, and $A\setminus \{a_i^0\}=B\setminus \{b_i^0\}$; 
\item $f(G,A)=-\infty$ \quad otherwise. 
\end{itemize}

In fact, since there are no new edges, $u_i$ is isolated, there are no feasible solutions with colors incident to $u_i$, and the maximum 2-edge colorable subgraph problem is the same as in $G_1$. This means that the only way to relate $A$ and $B$ with each other is to put $A\setminus \{a_i^0\}=B\setminus \{b_i^0\}$ and $a_i^0=b_i^0+1$.

The time complexity to compute  $f(G,A)$ for a  given set $A$ by using the already computed values $f(G_1,B)$ is $O(1)$, because there is a 
one-to-one relation between $A$ and $B$.   

\paragraph{Operation ii)} Here we want to compute $f(G,A)$ for a given $A$, where $G$ is obtained by joining $G_1$ and $G_2$, and by using the already computed values $f(G_1,B)$ and $f(G_2,C)$. Clearly, since there are no edges between $G_1$ and $G_2$, we can combine any two solutions for $G_1$ and $G_2$ just checking the compatibility among $A$, $B$, and $C$. This means that, for a given $A$, we have that 

\begin{itemize}
\item $f(G,A)=f(G_1,B)+f(G_2,C)$ \quad if $a_{i}^0=b_{i}^0+c_{i}^0$, $a_{ij^-}^1=b_{ij^-}^1+c_{ij^-}^1$, $a_{ij}^1=b_{ij}^1+c_{ij}^1$, $a_{ij^-}^2=b_{ij^-}^2+b_{ij^-}^2$, $a_{ij}^2=b_{ij}^2+c_{ij}^2$,
$a_{ij^-}^{12}=b_{ij^-}^{12}+c_{ij^-}^{12}$, 
$a_{i1j}^{12}=b_{i1j}^{12}+c_{i1j}^{12}$, $a_{i2j}^{12}=b_{i2j}^{12}+c_{i2j}^{12}$, $a_{ij}^{12}=b_{ij}^{12}+c_{ij}^{12}$; 
\item $f(G,A)=-\infty$ \quad otherwise. 
\end{itemize}


The time complexity for computing $f(G,A)$, for a given $A$ and by using the already computed values $f(G_1,B)$ and $f(G_2,C)$, is $O(|V|^{(8h^2-7h-1)})$, because we only need to choose the values in $B$.

\paragraph{Operation iii)} We exploit here the fact that our problem can be polynomially solvable on  bipartite graphs.

Let $G_1=(V,E_1)$ be a graph where apply operation iii); $i$ and $j$ the labels involved in it; $f(G_1, B)$ the already computed optimal values on $G_1=(V,E_1)$; $G=(V,E)$ the graph obtained after this operation; and $f(G, A)$ the relatives optimal values that we want to compute.

Let $E_1^{ij}$ be the subset of the edges in $E_1$ with one vertex in $V_i$ and the other in $V_j$. Let $G_1'=(V,E_1\setminus E_1^{ij})$ be the subgraph of $G_1$ after removing all the edges in $E_1^{ij}$. Let $(\text{\textbf{x}}',\text{\textbf{x}})$ be a solution for $G_1$ with value $f(G_1, B)$, where $\text{\textbf{x}}'$ is a vector of color variables $x_{uv}\in \{0,1,2\}$, one for each edge in the subgraph $G_1'$; while $\text{\textbf{x}}$ is a color variable vector of length $E_1^{ij}$. In other words, each variable $x_{uv}$ says if the edge $(u,v)$ is not colored, or it is colored with color 1 or 2. Please note that a solution $(\text{\textbf{x}}',\text{\textbf{x}})$ univocally identifies the vertices of $V_i$ and $V_j$ with respect to the integers in $B_{ij}$ and $B_{ji}$.  We call these subsets by using the same superscripts and subscripts in $b_{i}^0$, $\ldots$, $b_{ij}^{12}$ and in $b_{j}^0$, $\ldots$, $b_{ji}^{12}$, that is $V_{i}^0$, $\ldots$, $V_{ij}^{12}$, and $V_{j}^0$, $\ldots$, $V_{ji}^{12}$. Let $c\colon \{0,1,2,\}^n \rightarrow \mathbb{N}$ be a function, which returns the number of colored edges in a solution.


Let $H=(S,T,E_H)$ be the bipartite graph with the vertices divided into $S=(V_i^0\cup V_{ij^-}^1\cup V_{ij}^1\cup V_{ij^-}^2\cup V_{ij}^2)$ and $T=(V_j^0\cup V_{ji^-}^1\cup V_{ji}^1\cup V_{ji^-}^2\cup V_{ji}^2)$, and with $E_H\subseteq S\times T$. The edges in $E_H$ are the only ones we can use to create a feasible solution for $G$, adding colored edges to $(\text{\textbf{x}}',\text{\textbf{x}})$. In Figure~\ref{fig:graph_H} there is a scheme of $H$ that shows the connections among the sets of nodes. In particular, a normal arc in the scheme represents a complete bipartite subgraph in $H$, e.g., $(V_i^0,V_j^0)$ and $(V_{ij^-}^1,V_{ji^-}^1)$; and the two dotted arcs, $(V_{ij}^1,V_{ji}^1)$ and $(V_{ij}^2,V_{ji}^2)$, represent two general (maybe not complete) subgraphs. Clearly, in $E_H$, there are no edges in $(V_{ij^-}^1\cup V_{ij}^1)\times (V_{ij^-}^2\cup V_{ij}^2)$ and in $(V_{ij^-}^2\cup V_{ij}^2)\times (V_{ij^-}^1\cup V_{ij}^1)$, because there are no ways of color them with respect to $B$, and $\text{\textbf{x}}'$.

The edges in $E_H$ strictly depend on the solution part $\text{\textbf{x}}$. In order to maximise the number of  edges in $E_H$, we need to minimise the colored edges in  $ V_{ij}^1 \times  V_{ji}^1$ and in $ V_{ij}^2\times V_{ji}^2$. To achieve this goal, we can pass through a new solution $(\text{\textbf{x}}',\text{\textbf{y}})$ for $G$, which is still compatible with $B$, and with  $c(\text{\textbf{x}})=c(\text{\textbf{y}})$.

In $(\text{\textbf{x}}',\text{\textbf{y}})$ we color with 1 as many edges as possible from $V_{ij}^1$ to $V_{j1i}^{12}$ and to $V_{ji}^{12}$; and from $V_{j1}^1$ to $V_{i1j}^{12}$ and to $V_{ij}^{12}$. Then, the minimum number of edges colored with 1 from $V_{ij}^1$ to $V_{ji}^1$ is $b^1=|b_{ij}^1-(b_{j1i}^{12}+b_{ji}^{12})|=|b_{ji}^1-(b_{i1j}^{12}+b_{ij}^{12})|$. Analogously, the minimum number of edges colored with 2 from $V_{ij}^2$ to $V_{ji}^2$ is $b^2=|b_{ij}^2-(b_{j2i}^{12}+b_{ji}^{12})|=|b_{ji}^2-(b_{i2j}^{12}+b_{ij}^{12})|$. 

Going back to $E_H$, we select $b^1$ couples of nodes in $V_{ij}^1\times V_{ji}^1$, and we do not connect them in $H$, because they are connected via $b^1$ edges colored with 1 in $(\text{\textbf{x}}',\text{\textbf{x}})$.  Analogously, we remove $b^2$ distinct edges in $V_{ij}^2\times V_{ji}^2$, and we do not connect them in $H$, because they are connected via $b^1$ edges colored with 1 in $(\text{\textbf{x}}',\text{\textbf{x}})$

\begin{figure}[t]
\centering\includegraphics[scale=0.6]{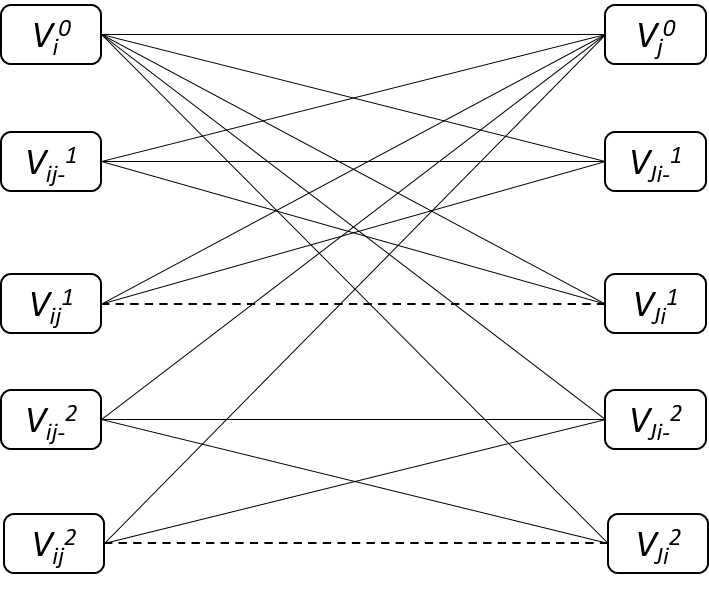}
\caption{Structure of graph $H$. Normal arcs state for complete bipartite subgraphs (e.g., $(V_i^0,V_j^0)$); dotted arcs state for bipartite subgraphs (e.g., $(V_{ij}^1,V_{ji}^1)$).}\label{fig:graph_H}
\end{figure}

We can now create a solution for $G$ coloring the edges in $H$, and adding them to $(\text{\textbf{x}}',\text{\textbf{y}})$. We can follow the simple steps blow.

\begin{itemize}
\item 1)  Select $b=\min\{b_i^0,b_j^0\}$ vertices in $V_i^0$ and  $b$ vertices in $V_j^0$. They form an even cycle with $2b$ edges, which can be colored alternating color 1 and color 2. Set $\overline{V}_j^0$ and $\overline{V}_i^0$ as the remaining vertices of $V_i^0$ and $V_j^0$, resp.. Clearly, either $\overline{V}_i^0=\emptyset$ or $\overline{V}_j^0=\emptyset$. Set $\overline{V}^0=\overline{V}_i^0\cup \overline{V}_j^0$.
\item 2) Find a maximum matching on the bipartite graph $H_1$ induced by $\overline{V}^0\cup V_{ij^-}^1\cup V_{ij}^1\cup V_{ji^-}^1\cup V_{ji}^1$. color the matching edges with 2.
\item 3) Find a maximum matching on the bipartite graph  $H_2$ induced by $\overline{V}^0\cup V_{ij^-}^2\cup V_{ij}^2\cup V_{ji^-}^2\cup V_{ji}^2$. color the matching edges with 1.
\end{itemize}

It can be easily shown that the maximum 2-edge-colorable subragh found following the three steps above for $H$ is optimum. 

Concluding, for a given set $A$, we can state that.

\begin{itemize}
\item $f(G,A)=f(G_1,B)+\max c(H)$ \quad if $A_{ij}$ and $A_{ji}$ derive from $B_{ij}$ and $B_{ji}$ following the solution obtained for $H$; and $A\setminus (A_i \cup A_j)=B\setminus (B_i \cup B_j)$;
\item $f(G,A)=-\infty$ \quad otherwise. 
\end{itemize}

We finally check that our method always leads to a solution with an optimal value $f(G,A)$. Suppose, for the sake of contradiction, that the optimal value $f(G,A)$ is greater than the number of colored edges in $(\text{\textbf{x}}',\text{\textbf{y}})$ plus $\max c(H)$, that is the value given by our procedure. Let $(\text{\textbf{z}}',\text{\textbf{z}})$ be the part of a $G$ optimal solution $f(G,A)$,  concerning the graph $G'_1$ and the edges in $E_1^{ij}$. The color vector $(\text{\textbf{z}}',\text{\textbf{z}})$, whatever it is, is also a feasible solution for $G_1$, and it is related to a set $B$ of values. In particular, it univocally identifies the sets $B_i$ and $B_j$, and the sets $V_{i}^0$, $\ldots$, $V_{ij}^{12}$, and $V_{j}^0$, $\ldots$, $V_{ji}^{12}$. Since, for a given $B$, the number of colored  edges in $E_1^{ij}$ is fixed, and equals the number of colored variables in  $\text{\textbf{z}}$, we can take a solution $(\text{\textbf{x}}',\text{\textbf{x}})$ with maximum value  $f(G_1,B)$, and apply our procedure, which build $(\text{\textbf{x}}',\text{\textbf{y}})$  and maximise the number of colored edges in $V_i\times V_j$. We get, then, a solution with value $f(G,A)$ ($\Rightarrow\Leftarrow$).



The time complexity for computing $f(G,A)$, for a given $A$, by using the already computed values $f(G_1,B)$ is $O(|V|^{(8k^2-7k-1)}(|V|+|E|))$, because we only need  to choose the values in $B$, $O(|V|^{(8k^2-7k-1)}$, and solve the two maximum matchings in $H_1$ and $H_2$ for each $B$, $O(|V|+|E|)$ \cite{tassa:2012}.  

\paragraph{Operation iv)} Let $G_1=(V,E_1)$ be a graph where renaming label $i$ with label $j$, $f(G_1, B)$ the already computed optimal values on $G_1=(V,E_1)$, $G=(V,E)$ the graph obtained after this operation, and $f(G, A)$ the relatives optimal values that we want to compute.

Clearly, since $G$ and $G_1$ are equal to each other, apart from the labels, any \textit{maximum 2-edge-colorable subgraph} of $G_1$ is also a \textit{maximum 2-edge-colorable subgraph} of $G$ and vice versa. This means that, for a given $A$, we have that.

\begin{itemize}
\item $f(G,A)=f(G_1,B)$ \quad if
\begin{itemize}
\item 	$A_i=\{0,\ldots,0\}$; $A_{lk}=B_{lk}$ for every disjoint labels $l,k\in ([h]\setminus \{i,j\})$; 
\item $a_{li^-}^1=b_{li^-}^1+b_{li}^1$, $a_{li}^1=0$, $a_{li^-}^2=b_{li^-}^2+b_{li}^2$, $a_{li}^2=0$, $a_{li^-}^{12}=b_{li^-}^{12}+b_{l1i}^{12}+b_{l2i}^{12}+b_{li}^{12}$, $a_{l1i}^{12}=0$, $a_{l2i}^{12}=0$, $a_{li}^{12}=0$, for every $l\in ([h]\setminus \{i,j\})$;
\item $a_{lj^-}^1=b_{lj^-}^1-b_{li}^1$, $a_{lj}^1=b_{lj}^1+b_{li}^1$, $a_{lj^-}^2=b_{lj^-}^2-b_{li}^2$, $a_{lj}^2=b_{lj}^2+b_{li}^2$, 

$a_{lj^-}^{12}=b_{lj^-}^{12}-b_{l1i}^{12}-b_{l2i}^{12}-b_{li}^{12}$, 

$a_{l1j}^{12}=b_{l1j}^{12}+b_{l1i}^{12}$, $a_{l2j}^{12}=b_{l2j}^{12}+b_{l2i}^{12}$, 

$a_{lj}^{12}=b_{lj}^{12}+b_{li}^{12}$, for every $l\in ([h]\setminus \{i,j\})$;

\item $a_{jl^-}^1=b_{lj^-}^1-b_{li}^1$, $a_{lj}^1=b_{lj}^1+b_{li}^1$, $a_{lj^-}^2=b_{lj^-}^2-b_{li}^2$, $a_{lj}^2=b_{lj}^2+b_{li}^2$, 

$a_{lj^-}^{12}=b_{lj^-}^{12}-b_{l1j}^{12}-b_{l2j}^{12}-b_{lj}^{12}$, 

$a_{l1j}^{12}=b_{l1j}^{12}+b_{l1i}^{12}$, $a_{l2j}^{12}=b_{l2j}^{12}+b_{l2i}^{12}$, 

$a_{lj}^{12}=b_{lj}^{12}+b_{li}^{12}$, for every $l\in ([h]\setminus \{i,j\})$;

\end{itemize}

for every $l\in ([h]\setminus \{i,j\})$; $A_{il}=B_{il}+B_{jl}$ for every $l\in ([h]\setminus \{i,j\})$;
\item $f(G,r,\mathcal{V},A)=-\infty$ \quad otherwise. 
\end{itemize}

In fact, since $G_1$ and $G_2$ are equivalent with each other in the structure, we just need to check if $A$ and $B$ are compatible.

The time complexity for computing $f(G,A)$, for a given $A$, by using the already computed values $f(G_1,B)$ is $O(1)$, because there is a one-to-one relation between $A$ and $B$.  

We can finally write the following theorem.

\begin{theorem}\label{thm:CliqueWidth}
Given an instance of \textit{maximum 2-edge-colorable subgraph problem} with a graph $G$, and a clique decomposition $\mathcal{X}$ of width $h$ for $G$, the \textit{maximum 2-edge-colorable subgraph problem} is solvable in $O(h^2 |V|^{(16h^2-14h-1)}(|V|+|E|))$ time.
\end{theorem}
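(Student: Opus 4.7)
The plan is to execute the dynamic programming scheme just described by traversing the given $k$-expression $\mathcal{X}$ bottom-up, and at each intermediate labelled graph $G'$ filling in the table $f(G', A)$ for every admissible state vector $A$. At the root, which corresponds to $G$, the optimum of the maximum 2-edge-colorable subgraph problem is then read off as $\max_A f(G, A)$. Correctness follows from the recurrences already derived for each of the four operations (introduce vertex, disjoint union, join, rename) by a straightforward induction along $\mathcal{X}$, so the remaining work is essentially the running-time analysis.

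I would carry out the analysis in three steps. First, bound the DP table size: since $A$ consists of $8h^2 - 7h$ non-negative integers each bounded by $|V|$, the number of reachable states at any node of $\mathcal{X}$ is at most $|V|^{8h^2 - 7h}$. Second, charge each operation type: creation and rename cost $O(1)$ per target state (a one-to-one compatibility check), disjoint union costs $O(|V|^{8h^2-7h-1})$ per state because fixing $A$ and $B$ determines $C$, and the join costs $O(|V|^{8h^2-7h-1}(|V|+|E|))$ per state, the extra factor coming from the two bipartite maximum matchings on $H_1$ and $H_2$. Multiplying the per-state cost by the state-space size yields $O(|V|^{16h^2-14h-1}(|V|+|E|))$ per join, which dominates the other three operations. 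Third, sum over the operations appearing in $\mathcal{X}$, noting that each branch can involve joins over up to $O(h^2)$ label pairs; this produces the $h^2$ prefactor in the final bound $O(h^2 |V|^{16h^2-14h-1}(|V|+|E|))$.

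The step I expect to be most delicate is not the complexity arithmetic but the correctness of the join transition. One needs an exchange argument showing that any feasible coloring $(\mathbf{x}', \mathbf{x})$ of $G$, obtained from a coloring of $G_1$ by choosing colors on the newly joined $V_i \times V_j$ edges, can be rewritten as $(\mathbf{x}', \mathbf{y})$ with the same total number of colored edges and compatible with the same tuple $B$, but with the monochromatic edges inside $V_{ij}^1 \times V_{ji}^1$ and $V_{ij}^2 \times V_{ji}^2$ minimized by routing them onto the $V^{12}$ classes whenever possible. Once this canonical form is established, the maximum-matching construction on the auxiliary bipartite graph $H$ (combined with the alternating even cycle on the $V_i^0 \cup V_j^0$ vertices) attains the true optimum for each target state $A$, and the overall runtime follows by summing the costs computed above.
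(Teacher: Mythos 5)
Your proposal follows essentially the same route as the paper: traverse the $k$-expression bottom-up running the dynamic program of the preceding subsection, observe that the join operation iii) dominates (a state space of size about $|V|^{8h^2-7h-1}$ times a per-state cost of $O(|V|^{8h^2-7h-1}(|V|+|E|))$ for guessing $B$ and solving the two bipartite matchings on $H_1$ and $H_2$), and justify the join transition by the same canonicalization/exchange argument the paper gives at the end of Operation iii). The only real difference is the final bookkeeping: the paper gets the $h^2$ prefactor from the $O(h^2|V|)$ bound on the length of a clique-width expression (citing Courcelle, Makowsky and Rotics) multiplied by a per-operation cost of $O(|V|^{16h^2-14h-2}(|V|+|E|))$, whereas your attribution of the $h^2$ to ``joins over $O(h^2)$ label pairs'' together with your slightly larger state-space count of $|V|^{8h^2-7h}$ would, summed literally over the $O(h^2|V|)$ operations of $\mathcal{X}$, cost an extra factor of $|V|$ relative to the stated bound.
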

\begin{proof}
Since the length of $\mathcal{X}$ is at most $O(h^2|V|)$ \cite{courcelle:2000}, and the most expensive operation, iii), costs $O(|V|^{(16h^2-14h-2)}(|V|+|E|))$, the time complexity of our dynamic programming algorithm is  $O(h^2|V|^{(16h^2-14h-1)}(|V|+|E|))$.   	
\end{proof}


\section{The maximum 2-edge-colorable subgraph problem and the method of iterative compression}
\label{sec:IterativeCompression}

In this section, we consider two problems related to the maximum 2-edge-colorable subgraph problem. Using the method of iterative compression (see Section 4 of \cite{FPTbook}), we show that these two problems are FPT with respect to the budget $k$.

The first problem that we will consider is the following:
\begin{problem}\label{prob:Removekedges}
	Given a graph $G$ and an integer $k$, is there $X\subseteq E(G)$, such that $|X|\leq k$ and $G-X$ is $2$-edge-colorable?
\end{problem}

Observe that if $G$ is a cubic graph, then in order to get a 2-edge-colorable subgraph, for each vertex $v$ of $G$ we have to remove at least one edge incident to $v$. Thus, we have to remove at least $\frac{|V|}{2}$ edges. Now, observe that in cubic graphs there is a set of size $\frac{|V|}{2}$ whose removal leaves a 2-edge-colorable subgraph if and only if $G$ is 3-edge-colorable. Thus, combined with \cite{holyer:1981}, we have that Problem \ref{prob:Removekedges} is NP-complete even for cubic graphs.

We continue with the following lemma.
\begin{lemma}
	\label{lem:MaxDegreetwoAuxProblem} Consider the following decision problem: given a bipartite graph $H$ with $\Delta(H)\leq 2$, two subsets of edges $E_1, E_2\subseteq E(H)$, and an integer $k$. The goal is to check whether there is a subset $X\subseteq E(H)$, such that $|X|\leq k$ and $H-X$ has a proper 2-edge-coloring $f:E(H)\backslash X\rightarrow \{1, 2\}$, such that if $e\in E_1-X$, then $f(e)=1$, and $e\in E_2-X$, then $f(e)=2$. There is a polynomial time algorithm for solving this problem.
\end{lemma}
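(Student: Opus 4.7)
The plan is to exploit the very restricted structure of $H$. Since $H$ is bipartite and $\Delta(H)\le 2$, every connected component of $H$ is either a (simple) path or an even cycle. The problem therefore decomposes across components: the minimum total number of removals needed is the sum, over components, of the minimum removals for that component. As a preprocessing step I would first observe that every edge in $E_1\cap E_2$ must belong to $X$ (otherwise $f(e)$ would have to equal both $1$ and $2$), so I remove all such edges from $H$ up front, decreasing the available budget accordingly, and thereafter assume $E_1\cap E_2=\emptyset$.

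Next I would handle each component by dynamic programming on a linear order of its edges. For a path with edges $e_1,\dots,e_\ell$, define a table $g(i,s)$ with $s\in\{1,2,R\}$ giving the minimum number of edges removed from $\{e_1,\dots,e_i\}$ under the stipulation that $e_i$ has status $s$ ($s=1$ or $s=2$ meaning $f(e_i)=1$ or $2$, and $s=R$ meaning $e_i\in X$). The transitions are immediate: $g(i,1)$ and $g(i,2)$ are only defined when the corresponding constraint from $E_1$ or $E_2$ is respected, $g(i,1)=\min\{g(i-1,2),g(i-1,R)\}$, $g(i,2)=\min\{g(i-1,1),g(i-1,R)\}$, and $g(i,R)=1+\min_{s'}g(i-1,s')$. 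The minimum removals for the path is then $\min_s g(\ell,s)$, computable in time $O(\ell)$.

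For an even cycle with edges $e_1,\dots,e_{2\ell}$, I would either (i) check directly whether one of the (at most two) alternating $2$-colorings is compatible with all constraints, giving cost $0$, or (ii) try every choice of a first removed edge $e_j$, contributing $1$ to the cost, cut the cycle at $e_j$, and run the path DP on the resulting path. The component cost is the minimum over these options, computable in $O(\ell^2)$ time per cycle (an $O(\ell)$ implementation is possible but unnecessary). Finally I sum the component costs and answer YES iff the total, added to $|E_1\cap E_2|$, does not exceed $k$; otherwise NO.

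The main obstacle is purely bookkeeping: making sure the DP states and transitions faithfully encode the ``must be color $1$'', ``must be color $2$'', ``removed'' options and that the cycle case correctly enumerates the two consistent colorings separately from the ``remove at least one edge'' case. Correctness follows because any feasible solution $X$ on a path or cycle, restricted to a maximal run of kept edges, must form a properly $2$-edge-colored path whose coloring is alternating and respects all constraints inside that run; the DP considers exactly such decompositions. The total running time is polynomial in the size of $H$, which completes the proof.
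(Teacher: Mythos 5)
Your proof is correct, but it takes a genuinely different route from the paper. You solve each path component by a three-state dynamic program over the edges (status ``color $1$'', ``color $2$'', or ``removed''), reduce cycles to paths by either checking the two alternating colorings or guessing a first removed edge, and handle $E_1\cap E_2$ by forced removal up front; since $H$ is bipartite with $\Delta(H)\le 2$, every component is a path or an even cycle and the transitions correctly encode the only adjacency constraints (between consecutive edges), so the argument is sound and yields a linear-time algorithm per path. The paper instead argues combinatorially: it defines a \emph{conflict} between consecutive edges of $E_1\cup E_2$ based on the parity of the subpath separating them and on whether their prescribed colors agree, shows that an optimal $X$ can be assumed to consist of constrained edges, and resolves conflicts greedily by always deleting the right edge of each conflicting pair (with a separate case analysis for cycles in which every consecutive pair conflicts). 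Your DP is more mechanical and easier to verify line by line, and it silently absorbs corner cases (such as edges in $E_1\cap E_2$, or two constrained edges sharing a vertex) that the conflict formulation must treat explicitly; what the paper's approach buys is the structural notion of a conflict itself, which it reuses verbatim in the later Lemma~\ref{lem:Degree3constraints}, whereas your DP would have to be re-adapted there. Either way, for the statement of Lemma~\ref{lem:MaxDegreetwoAuxProblem} your argument is complete.
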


\begin{proof} We will actually prove that the minimization version of this problem is polynomial time solvable. Thus, we can find the smallest size of a feasible set $X$ can compare it with $k$. Clearly, when solving the minimization problem we can focus solely on connected graphs. Thus, $H$ is a path or a cycle. Let us start with paths. Let $E_1\cup E_2=\{e_1,....,e_r\}$. Assume that the labelling is done so that when you look at the path from left to right, the edges appear in this order. Let us define the notion of a conflict. A pair of consecutive edges $e_j, e_{j+1}$ forms a conflict if the length of the path between them is even and they belong to the same $E_1$ (or the same $E_2$) (roughly speaking their colors should be the same) or they belong to different $E_j$s and their distance is odd. Observe that this definition is meaningful even when $e_j$ and $e_{j+1}$ are incident to the same vertex. In this case the distance is zero, hence we have a conflict if they must have the same color.
	
	Now the critical observation is that if one has a smallest $X$ that overcomes the conflicts, then we can always assume that $X\subseteq \{e_1,....,e_r\}$. This follows from the observation that if you have removed an edge $f$ overcoming a certain conflict, then you can remove the closest right edge from $\{e_1,....,e_r\}$. Observe that in the latter case this edge from $E_1\cup E_2$ will remove the conflict next to it, too. Thus, if we assume that $e_{i_1}, e_{i_1+1} $,...,$e_{i_q}, e_{i_q+1}$ form consecutive conflicts, then by removing $X=\{e_{i_1+1},e_{i_2+1},...,e_{i_q+1}\}$ we will get rid of all conflicts. Moreover, the feasible set $X$ will be the smallest.
	
	Now, let us consider the case of cycles $C$. Again, we can assume that $E_1\cup E_2=\{e_1,....,e_r\}$. Moreover, we will assume the same way of labelling the edges. The conflicts will be defined in the same way (we assume some circumference order on the cycle $C$). Now, let us show that any two consecutive edges $e$ and $e'$ from $E_1\cup E_2$ must form a conflict. Assume not. Let $e$ and $e'$ be two consecutive edges such that there is no conflict between them. Consider the subpath of the cycle $C$ starting from $e$ and ending on $e'$. Solve the optimization problem in this subpath $P$. Clearly we can extend the 2-edge-coloring of $P-X$ to that of $C-X$. 
	
	
	Thus, we are left with the assumption that $e_1, e_2$, ..., $e_{r-1},e_r$ and $e_r, e_1$ form conflicts. Now, if $r$ is even then by taking $X=\{e_2, e_4,..., e_r\}$ we will have that $G-X$ is without conflicts and clearly $X$ is smallest. On the other hand, if $r$ is odd, then $X=\{e_2, e_4,..., e_{r-1}\}\cup \{e_r\}$ is a smallest feasible set. The proof is complete.
\end{proof}

%

As in \cite{FPTbook} (see Section 4), this lemma implies that the disjoint version of the problem is FPT with respect to $k$. 
\begin{lemma}
	\label{lem:Disjoint2coloring} In the disjoint version of the problem we are given a graph $G$, integer $k$ and $W\subseteq E(G)$, such that $G-W$ is 2-edge-colorable and $|W|=k+1$. The goal is to check whether there is $X\subseteq E(G)\backslash W$ such that $|X|\leq k$ and $G-X$ is 2-edge-colorable. This problem can be solved in time $2^k\cdot poly(size)$.
\end{lemma}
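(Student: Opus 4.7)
The strategy is the standard iterative-compression ``colouring the terminals'' trick: I would enumerate all possible proper $2$-edge-colorings $\phi\colon W\to\{1,2\}$ of the known solution $W$ and, for each one, reduce the remaining task to an instance of the auxiliary problem of Lemma~\ref{lem:MaxDegreetwoAuxProblem}. Since $|W|=k+1$, there are at most $2^{k+1}$ candidate functions $\phi$, which will yield the desired running time $2^{k}\cdot poly(size)$.

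For each candidate $\phi$, I would first test whether $\phi$ is a proper edge coloring of the subgraph induced by $W$ (no two $W$-edges sharing an endpoint receive the same color); if the test fails, discard $\phi$. Otherwise, set $H:=G-W$ and define
\[
E_1:=\{e\in E(H): \text{some endpoint of } e \text{ lies on some } w\in W \text{ with } \phi(w)=2\},
\]
\[
E_2:=\{e\in E(H): \text{some endpoint of } e \text{ lies on some } w\in W \text{ with } \phi(w)=1\}.
\]
The idea is that an edge $e\in E_1$, if kept, must receive color $1$ (because an incident $W$-edge already carries color $2$), and symmetrically for $E_2$; edges in $E_1\cap E_2$ can never be kept and are forced into $X$.

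Because $G-W$ is $2$-edge-colorable, it has maximum degree at most $2$ and no odd cycle, so $H$ is bipartite with $\Delta(H)\le 2$ --- exactly the type of input accepted by Lemma~\ref{lem:MaxDegreetwoAuxProblem}. Invoking that lemma on $(H,E_1,E_2,k)$ decides in polynomial time whether there is $X\subseteq E(H)\subseteq E(G)\setminus W$ with $|X|\le k$ and a proper $2$-edge-coloring of $H-X$ respecting the $E_i$-constraints. Concatenating such a coloring with $\phi$ on $W$ then produces a proper $2$-edge-coloring of the whole of $G-X$, so that $X$ is a feasible answer to the disjoint problem.

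For the reverse direction: if a feasible $X$ exists with witness coloring $f$ of $G-X$, then $\phi:=f|_W$ passes the properness test and $f|_{H-X}$ satisfies the $E_1,E_2$ constraints induced by $\phi$, so the enumeration will succeed on this particular $\phi$. The only point that needs care will be to verify that the $E_i$ constraints exactly express the compatibility between the $\phi$-coloring of $W$ and any $2$-edge-coloring of $H-X$ (so that their union is indeed proper on $G-X$); once this is checked, the total running time of enumeration plus the polynomial-time subroutine of Lemma~\ref{lem:MaxDegreetwoAuxProblem} is $2^{k+1}\cdot poly(size)=2^{k}\cdot poly(size)$, as required.
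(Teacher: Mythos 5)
Your proposal is correct and follows essentially the same route as the paper: enumerate the at most $2^{k+1}$ proper $2$-edge-colorings of $W$, derive the forced color constraints $E_1,E_2$ on $G-W$ (noting that $G-W$ is bipartite with maximum degree at most two), and invoke Lemma~\ref{lem:MaxDegreetwoAuxProblem} on each instance. Your explicit treatment of $E_1\cap E_2$ corresponds to the paper's remark that edges incident to degree-two vertices of $G[W]$ must go into $X$, and the correctness argument in both directions matches the paper's (largely implicit) reasoning.
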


\begin{proof} The proof is similar to the ones given in Section 4 of \cite{FPTbook}. Since the edges of $W$ cannot deleted in $G-X$, it is necessary that $G[W]$ is 2-edge-colorable. Let us consider all possible 2-edge-colorings $f_W$ of $W$. Clearly, their number is at most $2^{|W|}=2^{k+1}$. Now for each of them let $E_1^W$ and $E_2^W$ be the color classes of $f_W$. Let $E_1$ be the set of edges of $G-W$ that are adjacent to an edge from $E_2^W$. Similarly, let $E_2$ be the set of edges of $G-W$ that are adjacent to an edge from $E_1^W$. Observe that any edge of $E_1$ either has to be deleted or colored with 1. Similarly, any edge of $E_2$ either has to be deleted or colored with 2. Thus, in order to solve this problem we have to solve the instance of the problem from Lemma \ref{lem:MaxDegreetwoAuxProblem} for $(G-W, E_1, E_2, k)$. Observe that the edges outside $W$ which are incident to degree-two vertices of $G[W]$, must be deleted (they must be taken in $X$). Also observe that by the definition of $W$, $G-W$ satisfies the conditions of the Lemma \ref{lem:MaxDegreetwoAuxProblem}. According to the lemma, each of these $2^{|W|}=2^{k+1}$ instances can be solved in time $ poly(size)$. Thus, we have the desired running time. The proof is complete.
\end{proof}

Since the disjoint version of our problem can be solved in time $2^k\cdot poly(size)$, we immediately have the following result as a consequence of the method of iterative compression (Section 4 of \cite{FPTbook}):
\begin{theorem}
	\label{thm:RemovedEdges} Problem \ref{prob:Removekedges} is FPT with respect to $k$ and it can be solved in time $3^k\cdot poly(size)$.
\end{theorem}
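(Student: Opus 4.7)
The plan is to apply the standard method of iterative compression (see Section~4 of \cite{FPTbook}) using Lemma \ref{lem:Disjoint2coloring} as a black-box subroutine. Fix any ordering $e_1,\ldots,e_m$ of $E(G)$ and let $G_i$ denote the subgraph spanned by $\{e_1,\ldots,e_i\}$. I maintain, by induction on $i$, a set $Z_i\subseteq E(G_i)$ with $|Z_i|\le k$ such that $G_i - Z_i$ is $2$-edge-colorable, starting from $Z_0:=\emptyset$. At step $i+1$, the set $Z_i\cup\{e_{i+1}\}$ is automatically a solution for $G_{i+1}$ of size at most $k+1$, since $G_{i+1} - (Z_i\cup\{e_{i+1}\}) = G_i - Z_i$ up to isolated vertices. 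If $|Z_i\cup\{e_{i+1}\}|\le k$, take $Z_{i+1}$ to be this set; otherwise invoke a \emph{compression routine} on $(G_{i+1},\,Z_i\cup\{e_{i+1}\})$. If this routine reports failure, output ``no'': any solution for $G$ of size $\le k$ would restrict to a solution for $G_{i+1}$ of the same size, so its absence rules out one for $G$.

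The compression routine takes a graph $H$ and a solution $Z\subseteq E(H)$ of size exactly $k+1$ and searches for $X\subseteq E(H)$ of size $\le k$ with $H-X$ being $2$-edge-colorable. I enumerate every subset $Y\subseteq Z$, treating $Y$ as the guess for $X\cap Z$. For a fixed guess, set $W := Z\setminus Y$; any candidate $X$ with $X\cap Z = Y$ decomposes as $X = Y\cup X'$ for some $X'\subseteq E(H)\setminus Z$, and we need $|X'|\le k-|Y|$ and $(H-Y)-X'$ to be $2$-edge-colorable. This is precisely an instance of the disjoint problem of Lemma \ref{lem:Disjoint2coloring} on the graph $H-Y$ with forbidden set $W$ and budget $k' := k-|Y|$: indeed $|W| = k+1-|Y| = k'+1$, and $(H-Y)-W = H-Z$ is $2$-edge-colorable by hypothesis on $Z$. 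Correctness is immediate: if $X^{\star}$ is an optimum of size $\le k$, then the guess $Y := X^{\star}\cap Z$ succeeds with $X' := X^{\star}\setminus Z$.

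By Lemma \ref{lem:Disjoint2coloring}, each disjoint call runs in $2^{k'+1}\cdot poly(size)$ time (the $2^{|W|}$ bound arising in its proof), so summing over all guesses $Y\subseteq Z$ yields
\[
\sum_{j=0}^{k+1}\binom{k+1}{j}\,2^{k+1-j}\cdot poly(size) \;=\; (1+2)^{k+1}\cdot poly(size) \;=\; 3^{k+1}\cdot poly(size)
\]
by the binomial theorem. A single compression call therefore runs in $O(3^k)\cdot poly(size)$ time, and the outer iteration performs at most $m$ such calls, yielding the announced $3^k\cdot poly(size)$ total. The argument has no real obstacle beyond bookkeeping; the only care needed is to verify that the invariant ``current graph minus current solution is $2$-edge-colorable'' is preserved both along the outer loop (automatic, since the offending edge $e_{i+1}$ is placed into the solution) and as the precondition for every disjoint call (automatic, since $(H-Y)-(Z\setminus Y) = H-Z$).
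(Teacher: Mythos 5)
Your proposal is correct and is exactly the argument the paper intends: the paper's proof of Theorem \ref{thm:RemovedEdges} simply cites the iterative compression framework of Section 4.1.1 of \cite{FPTbook} together with Lemma \ref{lem:Disjoint2coloring}, and you have spelled out the standard outer loop, the guessing of $Y=X^{\star}\cap Z$, the reduction to the disjoint instance on $H-Y$ with $W=Z\setminus Y$, and the $\sum_j\binom{k+1}{j}2^{k+1-j}=3^{k+1}$ accounting that yields the stated $3^k\cdot poly(size)$ bound.
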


\begin{proof} This just follows from the method of iterative compression. See Section 4.1.1 of \cite{FPTbook} The proof is complete.
\end{proof}

%

Now, we turn to the vertex-set removal version of the problem. 
\begin{problem}\label{prob:Removekvertices}
	Given a graph $G$ and an integer $k$, is there $X\subseteq V(G)$, such that $|X|\leq k$ and $G-X$ is $2$-edge-colorable?
\end{problem}

It can be shown that Problem \ref{prob:Removekvertices} is NP-complete, too. This just follows from Lemma 1,2 and Theorem 1 from \cite{Chiarelli:2018}. The key observation from Theorem 1 there is that when the authors find an odd cycle transversal (a subset $V'\subseteq V(H)$, such that $H-V'$ is bipartite) of the line graph $L(G)$, they actually have that it is also gives an even 2-factor. In other words, $L(G)-V'$ is 2-edge-colorable.

We continue with the following lemma.
\begin{lemma}
	\label{lem:Degree3constraints} Let $K$ be a graph with $\Delta(K)\leq 2$. Consider a graph $H$ obtained from $K$ by attaching maximum one pendant edge to some degree-two vertices of $K$. Assume that the resulting degree three vertices are independent, moreover, on pendant edges we have a color from $\{1,2\}$ which must be satisfied. Consider the following problem: find a smallest subset $J$ of degree two vertices of $H$ (we are not allowed to take degree-one or degree-three vertices), such that $H-J$ admits a 2-edge-coloring respecting the constraints on pendant edges. This problem can be solved in polynomial time.
\end{lemma}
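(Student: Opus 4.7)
I would solve the (equivalent) minimisation version: find a feasible $J$ of smallest size. Since the problem decomposes over the connected components of $H$, I handle each component separately. Because $K$ has maximum degree at most two, each component of $K$ is a path or a cycle, and pendants are attached at an independent set $P$ of degree-two vertices of $K$. Consequently, every degree-three vertex of $H$ has two distinct $K$-neighbours, and each such neighbour is itself degree-two in $H$ unless it is a $K$-path endpoint, in which case the local situation is resolved in constant time (the only removable $K$-neighbour is forced into $J$, or the component is reported infeasible).

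The structural backbone is the following: for every pendant-anchor $v\in P$ with pendant colour $c$, the three edges at $v$ cannot all be properly $2$-coloured, so any feasible $J$ must contain at least one of the two $K$-neighbours of $v$; moreover, if only one of them is removed, the surviving $K$-edge at $v$ is forced to colour $3-c$. Hence after removing $J$ the remaining graph is a disjoint union of sub-paths of $K$ together with pendant edges sticking out only at sub-path endpoints, and each surviving sub-path is properly $2$-edge-colourable iff the alternating pattern $1,2,1,2,\ldots$ is compatible with the (at most two) pendant-forced colour constraints at its endpoints --- a parity check on the sub-path length. Based on this, I would set up a left-to-right dynamic programme along each path component whose state at vertex $v_i$ is the constant-size triple (is $v_i\in J$?, is the $K$-edge $v_{i-1}v_i$ present?, and if so, its colour). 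Transitions encode: (i) if $v_i\in J$, both incident $K$-edges are absent; (ii) if $v_i\notin J$ and both boundary edges are present, their colours must alternate and $v_i\notin P$; (iii) if $v_i\in P$, exactly one (or both) of its two $K$-edges must be absent and the surviving one, if any, must have colour $3-c$. The DP cost counts removals, so it minimises $|J|$. Cycle components are handled by guessing the state at an arbitrary starting vertex (only constantly many possibilities) and requiring consistency when the cycle closes.

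The main obstacle will be verifying that this constant-size state faithfully encodes the two intertwined constraints --- global $1$/$2$ alternation inside each surviving sub-path and the local pendant forcings at anchors --- so that no feasible configuration is missed and no infeasible one admitted. This reduces to a finite case analysis on the transition table, which I would tabulate explicitly in the full proof. Since each transition is evaluated in $O(1)$ time and there are $O(|V(H)|)$ of them per component, the algorithm runs in time linear in $|V(H)|$, yielding the claimed polynomial-time algorithm.
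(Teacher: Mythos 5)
Your proposal is correct, but it takes a genuinely different route from the paper. The paper argues greedily: it orders the constrained pendant edges along the path (or cycle), reuses the notion of a \emph{conflict} between consecutive constrained edges from Lemma~\ref{lem:MaxDegreetwoAuxProblem}, and shows by an exchange argument that one may always safely delete a specific $K$-neighbour of an anchor (the ``inner'' one when the stretch to the next pendant is conflicting, the ``outer'' one otherwise), reducing to a strictly smaller instance; cycles are handled by locating a conflict-free gap and cutting there, with a short case analysis when that gap is too short to cut cleanly. You instead set up a left-to-right dynamic programme whose constant-size state records membership in $J$ together with the presence and colour of the preceding $K$-edge, and you handle cycles by guessing the initial state. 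Your key structural observations (each anchor must lose at least one $K$-neighbour, the surviving $K$-edge at an anchor is forced to colour $3-c$, hence anchors end up at endpoints of the surviving sub-paths, whose colourability is a parity condition) match the facts the paper exploits, and your state is genuinely sufficient because every constraint of the problem is local to a vertex and its two $K$-neighbours, so it is fully checkable at the transition where both incident $K$-edges become known. What the two approaches buy is different: the paper's greedy yields an explicit combinatorial description of an optimal $J$ but rests on a delicate (and in the paper rather tersely justified) safeness argument for each local choice; your DP is mechanical to verify once the finite transition table is written out, minimises $|J|$ by construction, treats paths and cycles uniformly, and would extend with no extra work to weighted deletions. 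The only thing left implicit in your write-up is the transition table itself, which you correctly identify as a finite case analysis; this is routine and does not constitute a gap.
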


\begin{proof} Since we are solving a minimization problem, we can assume that $H$ is connected. Let us start with the case when $K$ is a path. Look at the path from left to right, and take the first two degree one vertices with constraints. Let $w_1$ and $w_2$ be these two degree one vertices. First, assume that the unique neighbor of $w_1$ is of degree two. Then consider the vertices $u,v$, the neighbors of the degree three vertex adjacent to $w_2$. Assume $u$ is between $w_1$ and $w_2$. Observe that we have to remove at least one of $u$ or $v$. Now, if the path between $w_1$ and $w_2$ does not create a conflict (see the proof of Lemma \ref{lem:MaxDegreetwoAuxProblem} for the definition of a conflict), then clearly we can just remove $v$ and solve the resulting smaller instance. On the other hand, if the path between $w_1$ and $w_2$ creates a conflict, then we need to remove a vertex between them, thus it is safe to remove $u$ and solve the remaining smaller instance. Now, assume that the neighbor of $w_1$ is of degree three. Then on its left there is no other conflicting edge. Thus, it suffices to remove the neighbor of the neighbor of $w_1$ that is between $w_1$ and $w_2$, and solve the resulting smaller instance. This allows us to consecutively solve the case of paths.
	
	Now, assume that we have a cycle in $K$ and some pendant edges with constraints are added to it in order to obtain $H$. If there are two consecutive degree one vertices such that the path between them is not creating a conflict, then let this path be $P$. Observe that $|V(P)|\geq 3$. First assume that $P$ contains at least four vertices. Consider the unique neighbors of neighbors of $w_1$ and $w_2$, respectively, that lie outside $P$. Clearly we can remove these two vertices (because we have to remove at least one around $w_1$ and $w_2$), and solve the resulting problem for the resulting path, and of course we can find the coloring of $P$ satisfying the constraints because it is conflict free. On the other hand, if $|V(P)|=3$, we solve two cases of the path problem: first we remove the unique neighbor of degree-three vertices (the unique degree-two vertex of $P$) and find the smallest set of vertices for the remaining path problem. Next, we remove the two vertices adjacent to the degree-three vertices that differ from the common vertex between them and solve the resulting path problem. Then we take the smaller of the two. Observe that the path instances are not reduced to multiple instances. Hence we get just two instances here.
	
	Thus we are left with the case, that any two consecutive pendant edges of the cycle form a conflict. Now for each of the pendant edges, remove, for example, the right degree two vertex adjacent to the degree vertex. Clearly this will be a smallest set, as otherwise, if we assume that there is a smaller one, then clearly there are two consecutive pendant edges such that between them no vertex is removed, hence there can be no 2-coloring extending the constraints. The proof is complete.
\end{proof}

Our next lemma works with the following extension of the previous problem.
\begin{lemma}
	\label{lem:ManyPendantEdges} Let $K$ be a graph with $\Delta(K)\leq 2$. Consider a graph $H$ obtained from $K$ by adding new vertices $w$, and joining $w$s to some vertices of $K$ with edges and adding a color from $\{1,2\}$ as a constraint. All edges adjacent to the same $w$ have the same color as a constraint. Consider the following decision problem: for this type of graph $H$ and an integer $k$, check whether there is a subset $X\subseteq V(K)$ (we are not allowed to take the vertices $w$ in $X$), such that $|X|\leq k$ and $G-X$ admits a 2-edge-coloring that satisfies the constraints on edges incident to $w$. This problem is FPT with respect to $k$ and it can be solved in time $2^{k(k+1)}\cdot poly(size)$.
\end{lemma}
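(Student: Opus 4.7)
My plan is a recursive branching algorithm driven by the constraint vertices $w\notin V(K)$ of high $H$-degree. The structural starting point is that every edge incident to a given $w$ carries the same prescribed color but all these edges are pairwise adjacent at $w$, so in any proper $2$-edge-coloring of $G-X$ at most one such edge can survive; consequently, if $w$ has $H$-degree $d$, then at least $d-1$ of its $K$-neighbors must lie in $X$. I would first reject any instance in which some $w$ has $d\geq k+2$, since in that case the budget is already insufficient.

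The main branching rule I would use is: while some $w$ has $H$-degree $d\in[2,k+1]$, pick such a $w$ and branch on the subset $S\subseteq N(w)\subseteq V(K)$ to be added to $X$, restricted to $|S|\in\{d-1,d\}$. This yields at most $d+1\leq k+2\leq 2^{k+1}$ branches, and in every branch the remaining budget decreases by at least $d-1\geq 1$. Hence the recursion tree has depth at most $k$ and at most $(2^{k+1})^{k}=2^{k(k+1)}$ leaves, matching the claimed running time.

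Once the above loop terminates every $w$ has $H$-degree $\leq 1$, so each pendant edge represents a single color requirement and the instance closely resembles the hypothesis of Lemma~\ref{lem:Degree3constraints}. Before invoking that lemma I would run two polynomial-time clean-up steps: (i)~if some $v\in V(K)$ carries two pendants with the same prescribed color, it is forced into $X$ (a reduction rule, no branching); (ii)~if some $v\in V(K)$ carries two pendants of different colors together with at least one incident $K$-edge, then $v$ has $H$-degree $\geq 3$, and I branch in a constant number of ways (remove $v$, or remove each $K$-neighbor of $v$), each branch again consuming at least one budget unit. At every leaf the configuration exactly fits Lemma~\ref{lem:Degree3constraints}, which is invoked in polynomial time; the original instance is accepted iff some leaf produces a total removal set of size $\leq k$.

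The main obstacle will be the correctness and completeness of the clean-up phase between the $w$-branching and the invocation of Lemma~\ref{lem:Degree3constraints}: I must verify that after rules (i)--(ii) no ``non-independent'' degree-$3$ vertex remains, which is precisely the hypothesis Lemma~\ref{lem:Degree3constraints} needs, and that every way an optimal $X$ could resolve the interactions among pendant-carrying vertices of $V(K)$ is reached at some leaf. Making the case analysis exhaustive while keeping every branch cost-positive, so that the depth bound $k$ and hence the overall factor $2^{k(k+1)}$ is preserved, is where the proof will require the most care.
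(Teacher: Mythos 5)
Your proposal is essentially the paper's proof: both rest on the observations that a degree-$d$ constraint vertex $w$ forces at least $d-1$ of its $K$-neighbors into $X$, that consequently there are at most $k$ such $w$s each of degree at most $k+1$, and that enumerating the $2^{k(k+1)}$ ways of placing these at most $k(k+1)$ ``root'' neighbors into $X$ reduces the problem to the polynomial-time Lemma~\ref{lem:Degree3constraints}; the paper performs this enumeration as a one-shot guess of a subset $R$ of the roots, whereas you unfold it as a depth-$\leq k$ branching recursion, which is only a cosmetic difference. The interface issue you flag at the end (ensuring the residual instance meets the hypotheses of Lemma~\ref{lem:Degree3constraints}, in particular independence of the degree-three vertices) is handled in the paper at the same level of detail, namely by discarding any guess $R$ for which some $w$ retains degree at least two or adjacent degree-three vertices remain.
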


\begin{proof} Let $Q_1$ be the number of those $w$s that are adjacent to exactly one edge. Similarly, let $Q_{\geq 2}$ be the number of those $w$s that are adjacent to at least $2$ edges. Observe that we can assume that $Q_{\geq 2}\leq k$ as for each such vertex we have to remove at least one neighbor from $K$, thus if their total number is greater than $k$, the instance is a no-instance. Thus, $Q_{\geq 2}\leq k$. Now observe that each fixed $w$ of degree at least two, must have degree at most $k+1$, as if its degree is at least $k+2$, then at least $k+1$ neighbors should be removed, hence we have a no-instance. Let us call these neighbors of $w$s as roots. Thus, each $w$ is adjacent to at most $k+1$ roots. Hence, the total number of roots is at most $Q_1+k(k+1)$.
	
	Now let us guess all subsets of those $k(k+1)$ roots that are not counted in $Q_1$. Their number is at most $2^{k(k+1)}$. Let $R$ be such a guess. Then in the graph $H-R$ we need to check whether $d_{H-R}(w)\leq 1$ for any $w$. Also we need to have $|R|\leq k$. If one of these conditions is not satisfied then the guess is wrong. Now, if these conditions are satisfied then clearly we cannot have adjacent degree three vertices, as in the solution at least one of adjacent degree three vertices must be removed, and hence $R$ is not the correct guess. Thus, if this condition is also satisfied for $R$, we need to solve the instance of the problem from previous lemma and check whether a smallest subset of size at most $k'=k-|R|$ exists. Lemma \ref{lem:Degree3constraints} guarantees that each of these instances can be solved in polynomial time. Thus, the total running-time of our algorithm will be $2^{k(k+1)}\cdot poly(size)$. The proof is complete.
\end{proof}

Now, we solve the disjoint version of our problem.
\begin{lemma}
	\label{lem:DisjointVertexProblem} In the disjoint version of our problem, we are given a graph $G$, integer $k$ and $W\subseteq V(G)$, such that $G-W$ is 2-edge-colorable and $|W|=k+1$. The goal is to check whether there is $X\subseteq V(G)\backslash W$ such that $|X|\leq k$ and $G-X$ is 2-edge-colorable. This problem can be solved in time $2^{(k+1)^2}\cdot poly(size)$.
\end{lemma}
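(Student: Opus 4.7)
My plan is to mirror the iterative-compression strategy used in Lemma~\ref{lem:Disjoint2coloring} (the edge-removal version): enumerate the relevant local structure of the $2$-edge-coloring around $W$, and reduce the residual problem on $V\setminus W$ to the earlier FPT subroutine of Lemma~\ref{lem:ManyPendantEdges}. Two polynomial-time feasibility checks come first: since $X\subseteq V\setminus W$, the subgraph $G[W]$ is preserved in $G-X$ and must itself be $2$-edge-colorable, which requires $\Delta(G[W])\leq 2$ and no odd-cycle component; moreover, any $v\in V\setminus W$ with $|N_G(v)\cap W|\geq 3$ would have degree at least $3$ in $G-X$ unless $v\in X$, so each such $v$ is placed in a forced set $X_0$, and the instance is rejected whenever $|X_0|>k$.

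Next, I enumerate every proper $2$-edge-coloring $f_W$ of $G[W]$. Because $G[W]$ is a disjoint union of paths and even cycles on at most $k+1$ vertices, there are at most $2^{k+1}$ such colorings. Given $f_W$, for each $w\in W$ let $F(w)\subseteq\{1,2\}$ be the set of colors used at $w$ by $f_W$; any extension to $G-X$ must use, at $w$, only colors in $\{1,2\}\setminus F(w)$ on edges from $w$ to $V\setminus W$, and at most one edge per such color.

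For each $f_W$ I reduce to an instance of Lemma~\ref{lem:ManyPendantEdges}. The base graph is $K=G[V\setminus W\setminus X_0]$, which satisfies $\Delta(K)\leq 2$ since $K\subseteq G-W$ and the latter is $2$-edge-colorable. For every pair $(w,c)$ with $c\in\{1,2\}\setminus F(w)$, I add a new pendant vertex $w^c$ joined to $N_G(w)\cap(V\setminus W\setminus X_0)$ carrying color constraint $c$; the proper $2$-edge-coloring requirement at $w^c$ then enforces the at-most-one-edge-of-color-$c$ condition at $w$. Applying Lemma~\ref{lem:ManyPendantEdges} to this constructed instance with budget $k-|X_0|\leq k$ runs in time $2^{k(k+1)}\cdot\mathrm{poly}(size)$, so multiplying by the $2^{k+1}$ enumerations of $f_W$ yields $2^{k+1}\cdot 2^{k(k+1)}=2^{(k+1)^2}\cdot\mathrm{poly}(size)$, matching the claimed bound.

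The main obstacle I anticipate is establishing correctness of the pendant reduction in the borderline case $F(w)=\emptyset$, where both pendants $w^1$ and $w^2$ share the neighborhood $N_G(w)\cap(V\setminus W\setminus X_0)$. A naive application of Lemma~\ref{lem:ManyPendantEdges} can appear over-restrictive, because it forces every surviving shared neighbor $v$ to carry edges of both colors to the pendant side, whereas in $G$ the single edge $(w,v)$ carries only one color. Showing the equivalence in both directions---that a valid $2$-edge-coloring of $G-X$ always induces a feasible pendant-graph solution, and that any pendant-graph solution can be translated back into a valid coloring of $G-X$ without increasing the deletion budget---should follow from a careful case analysis, possibly aided by a constant-overhead preliminary guess that assigns each shared neighbor $v$ to one of $w^1$ or $w^2$ up front, without inflating the overall running time beyond $2^{(k+1)^2}\cdot\mathrm{poly}(size)$.
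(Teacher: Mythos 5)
Your proposal follows essentially the same route as the paper: enumerate the at most $2^{k+1}$ proper $2$-edge-colorings of $G[W]$, convert the color classes at each $w\in W$ into single-color constraints on the edges leaving $W$, and hand the residual instance on $G-W$ to Lemma~\ref{lem:ManyPendantEdges} with budget at most $k$, multiplying the running times. Your additional preprocessing (forcing into $X$ every $v\notin W$ with three or more neighbours in $W$) is correct and is the vertex analogue of the remark in Lemma~\ref{lem:Disjoint2coloring} about edges incident to degree-two vertices of $G[W]$.

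The obstacle you flag at the end is, however, a genuine gap rather than a cosmetic worry, and you do not close it. If $F(w)=\emptyset$ (in particular whenever $w$ is isolated in $G[W]$, e.g.\ when $W$ is an independent set), the edges from $w$ to $V\setminus W$ carry no single color constraint, so they do not fit the format of Lemma~\ref{lem:ManyPendantEdges}, which insists that all edges at an added vertex share one prescribed color; and your two-pendant encoding $w^1,w^2$ is not equivalent to the original instance, since a surviving common neighbour $v$ would be forced to carry one edge of each color towards the pendant side where $G$ has only the single edge $(w,v)$. Your proposed repair is the right idea but the ``constant-overhead'' claim is wrong: the correct accounting is that such a $w$ can retain at most one cross edge of each color in $G-X$, so all but at most two of its neighbours outside $W$ must enter $X$ (hence its outside degree is at most $k+2$ in any yes-instance), and one then guesses the at most two survivors together with their colors. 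This costs $O(k^2)$ branches per such $w$ and hence a $k^{O(k)}=2^{O(k\log k)}$ factor overall, which preserves fixed-parameter tractability but, taken literally, pushes the bound above the stated $2^{(k+1)^2}\cdot poly(size)$ unless the bookkeeping is tightened. It is worth noting that the paper's own proof simply does not mention this case at all, so spotting it is to your credit; but a complete argument must handle it explicitly.
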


\begin{proof} Observe that if the solution exists, $G[W]$ must be 2-edge-colorable. Thus, we can guess all its 2-edge-colorings. Their total number is at most $2^{k+1}$. Now, for each of those guesses $f_W$, we define the set $E_1$ as those edges that are adjacent to an edge of $W$ of color 2, and similarly, let $E_2$ be those edges of $G$ that are adjacent to an edge $W$ of color 1. If an edge is both from $E_1$ and $E_2$ then we must delete its neighbor outside $W$. Now, in order to answer our problem, we need to solve the instance of the problem from the previous lemma for the graph $G-W$ with constraints on $E_1$ and $E_2$ and the parameter $k$. Observe that we are not allowed to touch the vertices in $W$. According to the previous lemma, each such instance can be solved in time $2^{k(k+1)}\cdot poly(size)$. Thus the total running time for the disjoint version of our problem is 
	\[2^{k+1}\cdot 2^{k(k+1)}\cdot poly(size)=2^{(k+1)^2}\cdot poly(size).\]
	The proof is complete.
\end{proof}

Since the disjoint 2-vertex-coloring problem can be solved in time $2^{(k+1)^2}\cdot poly(size)$, we immediately have the following result as a consequence of the method of iterative compression:
\begin{theorem}
	\label{thm:RemovedVertices} Problem \ref{prob:Removekvertices} is FPT with respect to $k$ and can be solved in time $2^{(k+1)^2+k}\cdot poly(size)$.
\end{theorem}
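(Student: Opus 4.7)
The plan is to invoke the iterative compression framework (Section 4.1.1 of \cite{FPTbook}) with Lemma \ref{lem:DisjointVertexProblem} supplying the compression routine. First I would fix an ordering $v_1, \ldots, v_n$ of $V(G)$ and build, inductively on $i$, a set $W_i \subseteq V_i := \{v_1,\ldots,v_i\}$ of size at most $k$ whose removal leaves $G[V_i]$ 2-edge-colorable, returning ``no'' if at some stage no such $W_i$ exists. The base case $W_0 = \emptyset$ is immediate, and at each inductive step the candidate $W_{i+1}^\star = W_i \cup \{v_{i+1}\}$ is a valid but possibly oversized solution for $G[V_{i+1}]$, of size at most $k+1$. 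If $|W_{i+1}^\star|\leq k$, I simply take $W_{i+1} = W_{i+1}^\star$; otherwise a compression step is required.

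For the compression, given $W_{i+1}^\star$ of size exactly $k+1$, I would iterate over all $2^{k+1}$ subsets $Y \subseteq W_{i+1}^\star$, each viewed as a guess for the intersection of the desired smaller solution with $W_{i+1}^\star$. For every guess, the residual task is to find $X' \subseteq V(G[V_{i+1}]) \setminus W_{i+1}^\star$ of size at most $k - |Y|$ such that $(G[V_{i+1}] - Y) - X'$ is 2-edge-colorable. Setting $G' = G[V_{i+1}] - Y$ and $W' = W_{i+1}^\star \setminus Y$, the pair $(G', W')$ fits the hypotheses of Lemma \ref{lem:DisjointVertexProblem} with parameter $k' = k - |Y|$, because $G' - W' = G[V_{i+1}] - W_{i+1}^\star$ is 2-edge-colorable and $|W'| = (k+1) - |Y| = k' + 1$. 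The lemma then resolves the disjoint instance in time $2^{(k'+1)^2}\cdot poly(size) \leq 2^{(k+1)^2}\cdot poly(size)$. If some guess $Y$ yields a witness $X'$, I set $W_{i+1} = Y \cup X'$; if every guess fails, no solution of size at most $k$ exists for $G[V_{i+1}]$, and hence none for $G$.

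Summing over the $n$ inductive steps and the $2^{k+1}$ guesses per compression, the total running time will be $n \cdot 2^{k+1} \cdot 2^{(k+1)^2} \cdot poly(size) = 2^{(k+1)^2 + k}\cdot poly(size)$, with the factor $n$ and the additional factor of $2$ absorbed into the polynomial. The main point to check carefully is the size bookkeeping around the guesses $Y$, so that the disjoint instance handed to Lemma \ref{lem:DisjointVertexProblem} always satisfies the precondition $|W'| = k'+1$; correctness then follows from the exhaustive enumeration of possible intersections $X \cap W_{i+1}^\star$. I do not expect any step beyond this to be an obstacle, since the rest is the usual mechanical lifting of a disjoint FPT algorithm to the full problem via iterative compression.
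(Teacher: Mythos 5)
Your proposal is correct and follows essentially the same route as the paper: the paper likewise invokes the standard iterative compression framework of Section 4.1.1 of \cite{FPTbook} on top of Lemma \ref{lem:DisjointVertexProblem}, bounding the work per compression step by $\sum_{i=0}^{k}\binom{k+1}{i}2^{(k-i+1)^2}\leq 2^{(k+1)^2+k+1}$, which matches your $2^{k+1}\cdot 2^{(k+1)^2}$ estimate up to the same absorbed constant. Your size bookkeeping for the guesses $Y$ (ensuring $|W'|=k'+1$ with $k'=k-|Y|$) is exactly the point the paper leaves implicit, and it checks out.
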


\begin{proof} This just follows from the method of iterative compression (see Section 4.1.1 of \cite{FPTbook}). The actual expression that needs to be bounded is the following one multiplied with $poly(size)$:
	\[\sum_{i=0}^{k}\binom{k+1}{i}2^{(k-i+1)^2}\leq 2^{(k+1)^2}\cdot \sum_{i=0}^{k}\binom{k+1}{i}\leq 2^{(k+1)^2}\cdot 2^{k+1}=2^{(k+1)^2+k+1}.\]
	Thus, the running time will be $2^{(k+1)^2+k}\cdot poly(size)$. The proof is complete.
\end{proof}

\section{Conclusion and future work}
\label{sec:conc}

In this paper, we considered the problem of assigning tasks to agents under time conflicts, which can also be applied to frequency allocation in point-to-point wireless networks. The problems is interesting both from the application point of view and from the theoretical one. Since it is our first approach to it, we focused on a restricted version of the problem, that remains computationally difficult. After collocating it in the world of task assignment problems, we formulate it as an edge-coloring problem, and we focused on obtaining parameterized complexity results. 

In particular, we formulated it as the maximum 2-edge-colorable subgraph problem. Our results state that this problem is paraNP-hard with respect to radius, diameter and $|V|-MaxLeaf$. On the positive side, it is fixed-parameter tractable with respect to $|V|-\delta$, branchwidth, treewidth, the dimension of the cycle space and $MaxLeaf$. Moreover, it is polynomial time solvable for the graphs of bounded cliquewidth.

From our perspective the following line of research is suitable for future research. For a graph $G$, let $\tau(G)$ be the size of the smallest vertex cover of $G$. Since in any graph $\nu(G)\leq \tau(G)$, Corollary \ref{cor:FPTnu1} and Lemma \ref{lem:Redparam} imply that the maximum 2-edge-colorable subgraph problem is FPT with respect to $\tau(G)$. We would like to ask:

\begin{question}
	\label{que:VertexCoverMatching} Is the maximum $2$-edge-colorable subgraph problem FPT with respect to $\tau(G)-\nu(G)$?
\end{question} The classical 2-approximation algorithm for the vertex cover problem and its analysis imply that for any graph $G$, we have $\tau(G)\leq 2\cdot \nu(G)$. This inequality means that in any graph $G$, $\tau(G)-\nu(G)\leq \nu(G)\leq \tau(G)$. Thus, a positive answer to Question \ref{que:VertexCoverMatching} will strengthen Corollary \ref{cor:FPTnu1} and its consequence for $\tau(G)$. 



Some lines of research that may be worth to investigate concern the generalisation to the weighted case, where each color can have different weights even in combination with specific edges, adding different classes of constraints among colors, budged constraints on the sum of the weights like in \cite{aloisio_a:2020,aloisio:2020}, and analyzing them with respect to different graph topologies, like it is done in \cite{alo_nav:2019,aloisio:2019}. A next interesting step would be to investigate the problem on hypergraphs, when a task has to be carried out by more than two agents. 

\section*{Acknowledgement} The authors would like to thank professor Michele Flammini for his attention to this work. The second author would like to thank Dr. Zhora Nikoghosyan for useful discussions on Hamiltonian graphs. He also would like to thank Dr. Kenta Ozeki for pointing him out the paper \cite{Ding}.



\end{document}